\begin{document}

\newtheorem{definition}{\bf ~~Definition}
\newtheorem{observation}{\bf ~~Observation}
\newtheorem{lemma}{\bf ~~Lemma}
\newtheorem{proposition}{\bf ~~Proposition}
\newtheorem{remark}{\bf ~~Remark}
\newcommand{\tabincell}[2]{\begin{tabular}{@{}#1@{}}#2\end{tabular}}

\title{Device-to-Device Communications Underlaying Cellular Networks: To Use Unlicensed Spectrum or Not?}

\author{
\IEEEauthorblockN{
{Fanyi Wu}, \IEEEmembership{Student Member, IEEE},
{Hongliang Zhang}, \IEEEmembership{Student Member, IEEE},
{Boya Di}, \IEEEmembership{Student Member, IEEE},
{Jianjun Wu},
{and Lingyang Song}, \IEEEmembership{Fellow, IEEE}}\\

\vspace{-0.5cm}

\thanks{Manuscript received December 22, 2018; revised April 17, 2019; accepted May 12, 2019. This work was supported by the National Natural Science Foundation of China under Grant 61625101. Part of this paper was presented at IEEE ICC, Shanghai, China, May~2019~\cite{FHBJL-2019}. The associate editor coordinating the review of this paper and approving it for publication was D.~Niyato.}

\thanks{The authors are with Department of Electronics Engineering, Peking University, Beijing, China (email: fanyi.wu@pku.edu.cn, hongliang.zhang@pku.edu.cn, diboya@pku.edu.cn, just@pku.edu.cn, lingyang.song@pku.edu.cn)}
}
\maketitle

\vspace{-0.5cm}

\begin{abstract}
In this paper, we consider device-to-device~(D2D) communications as an underlay to the cellular networks over both licensed and unlicensed spectrum, where Long Term Evolution~(LTE) users utilize the spectrum orthogonally while D2D users share the spectrum with LTE users. In the system, each LTE and D2D user can access the licensed or unlicensed band for communications. To maximize the total throughput of the system, we leverage stochastic geometry to derive the throughput for each kind of users by modeling the deployment of users as Poisson point processes~(PPPs), and investigate the spectrum access problem for these users. Since the problem is NP-hard, we propose a sequential quadratic programming~(SQP) based algorithm to obtain the corresponding suboptimal solutions. Theoretically, we evaluate the system performance by analyzing the throughput regions. Simulation results validate the accuracy of the geometric analysis and verify the effectiveness of the proposed algorithm.
\end{abstract}

\begin{IEEEkeywords}
Device-to-device unlicensed communications, spectrum access, stochastic geometry, sequential quadratic programming
\end{IEEEkeywords}

\section{Introduction}%
\label{Introduction}

The rapid growth of mobile devices and Internet-based applications has led to the explosive increase in mobile data traffic. It has been predicted that the mobile data demand will grow over 1000-fold in the next decade, also known as ``1000x mobile data challenge''~\cite{WP_Rising}. To tackle this challenge, many technologies have been developed to enhance the spectral efficiency. Device-to-device~(D2D) communication is one of the promising solutions in this regard~\cite{AQV-2014,KMCCK-2009}. Specifically, D2D communications allow two proximal users to set up direct link bypassing the BS and to share the licensed spectrum with the cellular networks. Due to the proximity~\cite{GEGSNGZ-2012} and underlay property~\cite{HLZ-2016}, D2D communications can effectively improve can effectively improve the throughput as well as the energy efficiency~\cite{RYGNC-2016,JJCWW-2016} in the network.

However, the limited licensed spectrum may not be sufficient to support D2D communications, especially in the hotspot scenarios. In order to further improve system throughput, D2D communications underlaying cellular networks over the unlicensed spectrum are proposed, referred to as D2D-Unlicensed~(D2D-U) communications~\cite{HYL-2017}, which can be facilitated by the existing LTE-Unlicensed~(LTE-U) technologies~\cite{WP_Extending,YWHLSXJ-2016}. In the D2D-U networks, users can share the unlicensed spectrum with the existing Wi-Fi systems fairly and harmoniously via the Listen-Before-Talk~(LBT) mechanism~\cite{RMLZXL-2015}, which has been adopted in the LTE-licensed assisted access~(LTE-LAA) technology~\cite{BJYJ-2017}. Unlike Wi-Fi Direct, which allows two users to communicate directly by one user serving as an access point~(AP)~\cite{CHL-2015}, D2D-U communications require the assistance and control from the base station~(BS). With the network-assisted control, D2D-U communications can provide reliable services for users over the unlicensed spectrum.

In a D2D-U system, each LTE/D2D user can access the networks via either the licensed or unlicensed spectrum but the spectrum access method is different over these two bands. To be specific, LTE and D2D users utilize the licensed spectrum based on the orthogonal frequency division multiple access~(OFDMA) technique, while access the unlicensed spectrum by adopting the LBT mechanism\footnote{Actually, the spectrum access of LTE and D2D users operates similar to the existing cognitive radio~(CR) technique~\cite{YSRYC-2013}.}. To coordinate these two methods and improve the capacity of the networks, it is necessary to investigate the spectrum access problems for LTE and D2D users. In this paper, we consider a unified D2D-U network where LTE, D2D and Wi-Fi users coexist over both licensed and unlicensed spectrum. D2D users work as an underlay to LTE users in both licensed and unlicensed bands. To fairly and harmoniously coexist with Wi-Fi users, both LTE and D2D users utilize a Load-Based-Equipment~(LBE) based LBT mechanism~\cite{Netmanias} for the unlicensed channel access.

Nevertheless, it is challenging to build a unified framework to investigate the spectrum access problem in such a D2D-U network. First, a suitable mathematical model is required to theoretically analyze the performances of different users in the system and formulate the opportunistic feature of channel access over the unlicensed spectrum. Second, the spectrum access of LTE and D2D users over licensed and unlicensed spectrum is difficult to be investigated jointly, since the interference among LTE, D2D, and Wi-Fi users is complicated due to the underlay property of D2D users. To tackle the first issue, we leverage stochastic geometry to model the locations of LTE users, D2D transmitters~(TXs) and Wi-Fi APs as independent homogeneous Poisson point processes~(PPPs) with diverse densities. Besides, we formulate the unlicensed channel access of LTE, D2D and Wi-Fi users as hard core point processes~(HCPPs). To cope with the second issue, we investigate the spectrum access problem by optimizing the densities of different users to maximize the total throughput of the system. Since this problem is NP-hard, we design a sequential quadratic programming~(SQP) based spectrum access algorithm to obtain the suboptimal solutions iteratively.

In the literature, several works have investigated on the resource allocation~\cite{HYL-2017,RGFZ-2016} and performance analysis~\cite{BMDWD-2017,HWS-2016} in the D2D-U networks. Authors in~\cite{HYL-2017} proposed a resource allocation algorithm for D2D-U system using matching theory. In~\cite{RGFZ-2016}, a joint mode selection and resource allocation algorithm was proposed to minimize the overall interference in a D2D-U network, in which the quality of service~(QoS) requirements were considered. In~\cite{BMDWD-2017}, the performance comparison of D2D-U networks was discussed with different unlicensed access techniques for D2D users. The waiting probability, time delay and network capacity for D2D-U networks were analyzed based on a varying traffic model in~\cite{HWS-2016}. Different from the aforementioned works, we present a unified analytical framework to jointly investigate the spectrum access for both LTE and D2D users in the D2D-U network. Besides, there also exist several works on the spectrum access problem in the LTE-U networks~\cite{RRYS-2017,FEEMR-2015}, which only focused on the spectrum access among LTE users. In~\cite{RRYS-2017,FEEMR-2015}, LTE users accessed the spectrum based on the expected payoff~\cite{RRYS-2017} and the utility of users~\cite{FEEMR-2015}, respectively. Unlike the LTE-U systems, the interference among LTE and D2D users makes the spectrum access problem more complicated in the D2D-U systems. Moreover, the existing works on the stochastic geometry analysis in D2D underlaying cellular networks only focused on the licensed spectrum access. For instance, authors in~\cite{MBPF-2016} investigated on the power distribution in a D2D underlaying cellular network over the licensed spectrum using stochastic geometry. Compared with the licensed spectrum access, analyzing the unlicensed spectrum access using stochastic geometry is more difficult, since the opportunistic feature of unlicensed spectrum access needs to be formulated.

The main contributions of this paper are summarized as follows:
\begin{itemize}
\item We study a unified D2D-U network consisting of LTE, D2D, and Wi-Fi users, and then analytically derive their throughput over both licensed and unlicensed spectrum by leveraging stochastic geometry.
\item We jointly investigate the spectrum access problem for LTE and D2D users by maximizing the total throughput in the system, and then propose a SQP-based algorithm to obtain the suboptimal solutions.
\item We characterize the throughput regions to analyze the system performance and illustrate the spectrum access issue.
\end{itemize}

The rest of the paper is organized as follows. In Section \ref{System Model}, we introduce the system model for the coexistence among LTE, D2D and Wi-Fi users, and then present the spectrum sharing scheme of the network. In Section~\ref{Throughput Analysis}, we derive the throughput for each kind of users by geometric analysis. Then, we formulate the spectrum access problem and design a SQP-based algorithm to solve it in Section~\ref{Problem Formulation and Algorithm Design}. We analyze the convergence and complexity of the algorithm, and then illustrate the system performance via the throughput regions in Section~\ref{Performance Analysis}. Simulation results are presented in Section~\ref{Simulation Result}. Finally, Section~\ref{Conclusion} concludes this paper.

\section{System Model}%
\label{System Model}
In this section, we first describe the system consisting of LTE, D2D, and Wi-Fi users. Then, the spectrum sharing schemes over licensed and unlicensed spectrum are elaborated, respectively. For clarity, in following parts of this paper, we define cellular and D2D users over the licensed spectrum as \emph{LTE} and \emph{D2D} users, while those over the unlicensed spectrum are referred to \emph{LTE-U} and \emph{D2D-U} users, respectively.

\subsection{System Description}%

\begin{figure}[!t]
\centering
\vspace{-1mm}
\includegraphics[width=3.2in]{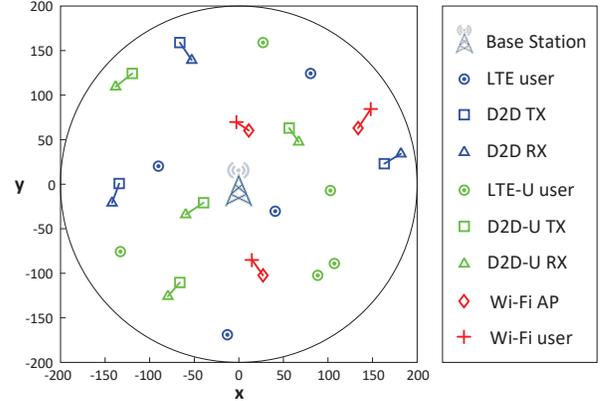}
\vspace{-3mm}
\caption{System model for LTE, D2D and Wi-Fi users coexistence in both licensed and unlicensed spectrum.}
\vspace{-5mm}
\label{scenario}
\end{figure}

As shown in Fig.~\ref{scenario}, we consider an OFDMA uplink single-cell system\footnote{Actually, our work can also be applied in the downlink scenario.}. The cell is modeled as a circular area with the BS at the center and the radius as $r_{cell}$. LTE users and D2D users utilize the licensed spectrum with the total bandwidth $B_l$, and LTE-U users and D2D-U utilize the unlicensed spectrum with the total bandwidth $B_u$. We assume that LTE and LTE-U users are distributed on $\mathbb{R}^2$ within the cell according to independent homogeneous PPPs with the densities of $\lambda_C$ and $\lambda_{CU}$, denoted by $\psi_C$ and $\psi_{CU}$, respectively. Similarly, we model the deployment of TXs of D2D users and D2D-U users as other independent homogeneous PPPs with the densities of $\lambda_D$ and $\lambda_{DU}$, denoted by $\psi_D$ and $\psi_{DU}$, respectively. The receiver~(RX) of each D2D/D2D-U user is uniformly distributed in a circular area centered on its TX with radius $L_d$, where $L_d \ll r_{cell}$. Besides, Wi-Fi APs are randomly distributed on the plane $\mathbb{R}^2$ within the cell following homogeneous PPP with the density of $\lambda_{W}$, denoted by $\psi_{W}$. For each AP, its associated Wi-Fi users are uniformly distributed in a circular area centered on the AP with the radius $L_w$, where $L_w \ll r_{cell}$.

In addition, we assume that the transmit power of LTE/LTE-U users is fixed on $P_C$, and the transmit power of D2D/D2D-U users is fixed on $P_D$, where $P_D < P_C$. Besides, the transmit power of Wi-Fi users is set as $P_W$. We consider a free space propagation path-loss model~\cite{PBHKL-2018}. Hence, the channel gain between two nodes $m$ and $n$ can be calculated as $g_{m,n} = d_{m,n}^{ - \alpha }h$, where $d_{m,n}$ indicates the distance between $m$ and $n$, $\alpha$ is the path loss exponent, and $h$ is the fading coefficient due to the small-scale fading. Assuming a frequency non-selective block fading channel, $h$ follows an exponential distribution\cite{SD-2017}, i.e., $h \sim \exp(1)$. The thermal noise at each user satisfies independent Gaussian distribution with zero mean and the same variance $\sigma^2$.

\subsection{Spectrum Sharing Schemes}
In our system, the spectrum access of LTE/D2D users over both licensed and unlicensed spectrum is controlled by the BS, which is different from the traditional ad hoc network~\cite{EC-1999}. For fair and harmonious spectrum access, we design the spectrum sharing schemes over licensed and unlicensed spectrum as follow\footnote{To support the mixed data transmissions over both the licensed and the unlicensed spectrum, the hardware of each D2D pair should be compatible with the $5$ GHz LTE-enabled hardware in the LTE-LAA technique~\cite{SLAAUS}. Specifically, each D2D pair should be equipped with LBE-based LBT module, dynamic frequency selection~(DFS) module, carrier aggregation~(CA) module, spectrum access switch, single filter, etc.}.

\subsubsection{Licensed Spectrum}
In the licensed spectrum, the frequency band is equally divided into subchannels to support OFDMA transmissions for both LTE and D2D users. More explicitly, we equally divide the licensed band $B_l$ into $K_l$ subchannels with uniform bandwidth of $B_l^{sub}=\frac{B_l}{K_l}$, denoted by $\mathcal {K}_l = \{ 1, 2, ... , K_l \}$, where $K_l={{\lambda _C}\pi r_{cell}^2}$ is the number of LTE users. As such, each LTE user is assigned to one orthogonal subchannel by the BS for data transmission. Each D2D user, working as an underlay, is allowed to share the spectrum with LTE users \cite{RJH-2015}. For fairness, we assume that each D2D user can occupy at most one subchannel for communication, and the BS selects a subchannel from $\mathcal {K}_l$ to support the D2D user.

\subsubsection{Unlicensed Spectrum}
The existing Wi-Fi systems adopt the carrier sense multiple access with collision avoidance~(CSMA/CA) protocol~\cite{MMAWSM-2013} to share the unlicensed band. Specifically, a Wi-Fi AP monitors the intended channel before transmission. When the channel is idle for a period of time, an exponential random back-off procedure begins, i.e., a back-off integral counter is generated, and then the AP keeps sensing the channel. As soon as the channel is judged idle for a time slot, the counter decreases by one. However, if the channel is sensed busy, the counter will freeze until the channel is sensed idle again. When the counter decreases to zero, the Wi-Fi user will occupy the whole unlicensed channel for transmission~\cite{KWY-2014}.

In order to coexist with Wi-Fi users in a friendly manner, we assume that both LTE-U and D2D-U users utilize the LBE-based LBT mechanism~\cite{Netmanias}, which operates quite similar to the CSMA/CA protocol of Wi-Fi system\footnote{The LBE-based LBT mechanism is a variant of the CSMA/CA protocol. In this mechanism, the users performs clear channel assessment~(CCA) whenever there exists data to be transmitted. If the channel is clear, the data can be transmitted immediately, otherwise, the data attempts to be transmitted after back-off during the extended CCA.}. However, different from the CSMA/CA protocol of the Wi-Fi system which adopts exponential back-off, the LBE-based LBT mechanism implements fixed linear back-off with fixed size of contention windows. To further improve the spectrum efficiency, we divide the unlicensed band $B_u$ into $K_u$ subchannels with uniform bandwidth $B_u^{sub} = \frac{B_u}{K_u}$, denoted by $\mathcal {K}_u = \{ 1, 2, ... , K_u \}$, to support the concurrent transmissions for LTE-U and D2D-U users. Here, $K_u = {{\lambda _{CU}}\pi r_{cell}^2}$ is the number of LTE-U users. Before transmissions, the BS first allocates the subchannels in $\mathcal {K}_u$ to the LTE-U and D2D-U users. Since different LTE-U users in the same subchannel may bring severe interference, each LTE-U user is assigned to one orthogonal subchannel to avoid the interference. As for D2D-U users, each user is allocated to one subchannel randomly for fairness. After that, each user monitors the subchannel and decides whether to occupy it or not. Once the subchannel is sensed idle, each user backs off for a random period, which is similar to the Wi-Fi system.

It is worth to mention that the data transmissions of all users are over the data channels while the signaling is transmitted on the control channels. Since data and control channels are operated on different bands, the signaling will not affect the performance of users. In our system, the signaling is performed on extra control channels, and thus, we only focus on the data transmissions over the data channels. However, we still present the signaling analysis on LTE-U and D2D-U users in the unlicensed spectrum as follow. We define that $M_{si}$ messages are required to inform the BS the subchannel information sensed by a LTE-U/D2D-U user, and $M_{sa}$ messages are needed for the BS to notify a LTE-U/D2D-U user about the subchannel allocation. Since each LTE-U/D2D-U user needs to report the sensing result over single allocated subchannel at a time, $M_{si}(\lambda_{CU}+\lambda_{DU})\pi r_{cell}^2$ messages are required to inform the BS. Besides, for subchannel allocation, the BS notifies all LTE-U and D2D-U users by sending $M_{sa}(\lambda_{CU}+\lambda_{DU})\pi r_{cell}^2$ messages.

\section{Throughput Analysis}
\label{Throughput Analysis}

In this section, we analyze the throughput for different users, including LTE and D2D users over the licensed spectrum, and LTE-U, D2D-U, and Wi-Fi users over the unlicensed spectrum.

\subsection{Throughput Analysis in the Licensed Spectrum}
In the licensed spectrum, due to the underlay property, the interplay between LTE and D2D users needs to be considered.
\subsubsection{LTE Users}
Define the distance between LTE user $u$ and the BS as $d_{u,b}$. Based on the property of PPP, $d_{u,b}$ follows a uniform distribution whose probability distribution function~(PDF) is given by ${f_{{d_{u,b}}}}(r) = {2r}/{r_{cell}^2},~ 0 \le r \le {r_{cell}}$. Since different LTE users utilize orthogonal subchannels for data transmissions, the interference to LTE user $u$ only comes from D2D users sharing the same subchannel. For simplicity, define the interference function $\mathcal {I}(z,P,\psi) = \sum_{i \in \psi } P d_{i,z}^{ - \alpha }h$. Then, the interference from D2D users can be express as $I_D = \mathcal {I}(b,P_D,\psi'_D)$, where $\psi'_D$ is the set of all co-channel D2D users of LTE user $u$, which is a thinning homogeneous PPP of $\psi_D$ with the density of $\lambda'_D = \frac{\lambda_D}{K_l}$. The signal-to-interference-plus-noise ratio~(SINR) of LTE user $u$ can be expressed as
\begin{equation}
\gamma _{C,u} = \frac{{{P_C}d_{u,b}^{ - \alpha }h}}{{{\sigma ^2} + {I_D}}}.
\end{equation}

Therefore, given the bandwidth $B_l^{sub}$, the throughput of LTE user $u$ can be calculated by
\begin{equation}\label{equ_Rc}
  R_{C,u} = B_l^{sub}\mathbb{E}\left\{ {\log_2 \left( {1 + \gamma _{C,u} } \right)} \right\},
\end{equation}
where $\mathbb{E}\left\{\cdot\right\}$ is the mathematical expectation. Note that the interference to different LTE users follows a uniform distribution, and thus, the throughput of different LTE users is the same. For brevity, we define $R_C = R_{C,u}$ as the throughput of an LTE user.

\begin{proposition}\label{Rc}
The throughput of an LTE user over the licensed spectrum is given by
\begin{equation}\label{Rc_exp}
R_C = {B_l^{sub}} \vartheta \left(P_C,  \mathcal{Q}\left(\lambda'_D, P_D, 0\right),  \delta(y), f_{d_{u,b}}(r) \right),
\end{equation}
where ${\lambda'_D} = \frac{\lambda_D}{K_l}$, $\delta(\cdot)$ is the Dirac function, $\vartheta(\cdot)$, $\mathcal{Q}(\cdot)$ and $\mathcal {H}(\cdot)$ are given by:
\begin{align}
&\vartheta \left(P,Q,g(y),f(r)\right) = \nonumber\\
&\qquad\int_{0}^\infty  \int_{-\infty}^\infty \int_{-\infty}^\infty  \frac{e^{ -s{\sigma ^2}}Q}{{\left( {s + {P^{ - 1}}{r^\alpha }} \right)\ln 2}}  g(y)f(r)dydrds,\label{function_eta} \\
&\mathcal{Q}\left(\lambda,P,l\right) = {e^{ - \lambda \int_0^{2\pi } {\int_0^{{r_{cell}}} {\frac{1}{{1 + {{\left( {sP} \right)}^{ - 1}}{\mathcal{H}^{\alpha /2}}(x,\theta ,l)}}x} dxd\theta } }},\label{function_Q} \\
&\mathcal {H}\left( x,\theta,l \right) = x^2 + l^2 - 2xl\cos\theta.\label{function_H}
\end{align}
\end{proposition}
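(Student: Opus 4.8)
The plan is to compute the ergodic rate in~(\ref{equ_Rc}) by the usual stochastic-geometry route: express $\mathbb{E}\{\log_2(1+\gamma_{C,u})\}$ through the coverage probability of $\gamma_{C,u}$, reduce that coverage probability to a Laplace transform of the D2D interference $I_D$ by exploiting the exponential small-scale fading, and evaluate the Laplace transform via the probability generating functional (PGFL) of the thinned PPP $\psi'_D$. Concretely, I would first use $\mathbb{E}\{X\}=\int_0^{\infty}\mathbb{P}(X>t)\,dt$ with $X=\log_2(1+\gamma_{C,u})\ge 0$ and the substitution $\tau=2^{t}-1$ to obtain $\mathbb{E}\{\log_2(1+\gamma_{C,u})\}=\frac{1}{\ln 2}\int_0^{\infty}\frac{\mathbb{P}(\gamma_{C,u}>\tau)}{1+\tau}\,d\tau$.

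Next I would condition on the BS--user distance $d_{u,b}=r$ (whose density is $f_{d_{u,b}}(r)=2r/r_{cell}^{2}$) and on $I_D$, and use $h\sim\exp(1)$ to get $\mathbb{P}(\gamma_{C,u}>\tau\mid r,I_D)=\mathbb{P}\big(h>\tau P_C^{-1}r^{\alpha}(\sigma^{2}+I_D)\big)=e^{-\tau P_C^{-1}r^{\alpha}\sigma^{2}}\,e^{-\tau P_C^{-1}r^{\alpha}I_D}$. Averaging over $I_D$ replaces the last factor by the Laplace transform $\mathcal{L}_{I_D}(s)=\mathbb{E}\{e^{-sI_D}\}$ evaluated at $s=\tau P_C^{-1}r^{\alpha}$, and averaging over $r$ yields $\mathbb{P}(\gamma_{C,u}>\tau)=\int_0^{r_{cell}}e^{-\tau P_C^{-1}r^{\alpha}\sigma^{2}}\,\mathcal{L}_{I_D}(\tau P_C^{-1}r^{\alpha})\,f_{d_{u,b}}(r)\,dr$.

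For $\mathcal{L}_{I_D}(s)$, recalling that $I_D=\sum_{i\in\psi'_D}P_Dd_{i,b}^{-\alpha}h_i$ with $\psi'_D$ a homogeneous PPP of density $\lambda'_D=\lambda_D/K_l$ over the cell disk and independent $h_i\sim\exp(1)$, I would first take the expectation over the fading, using $\mathbb{E}\{e^{-sP_Dd^{-\alpha}h}\}=(1+sP_Dd^{-\alpha})^{-1}$, and then apply the PGFL of the PPP to get $\mathcal{L}_{I_D}(s)=\exp\!\big(-\lambda'_D\int_{\mathrm{disk}}\big(1-\tfrac{1}{1+sP_D\|x\|^{-\alpha}}\big)dx\big)=\exp\!\big(-\lambda'_D\int_{\mathrm{disk}}\tfrac{dx}{1+(sP_D)^{-1}\|x\|^{\alpha}}\big)$. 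Switching to polar coordinates centered at the BS, where the squared BS--interferer distance is $\mathcal{H}(x,\theta,0)=x^{2}$ by~(\ref{function_H}), this is precisely $\mathcal{Q}(\lambda'_D,P_D,0)$ as defined in~(\ref{function_Q}).

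Finally I would substitute these expressions into the $\tau$-integral and, for each fixed $r$, change the variable $s=\tau P_C^{-1}r^{\alpha}$; since then $\frac{d\tau}{1+\tau}=\frac{ds}{s+P_C^{-1}r^{\alpha}}$, the integrand turns into $\frac{e^{-s\sigma^{2}}\mathcal{Q}(\lambda'_D,P_D,0)}{(s+P_C^{-1}r^{\alpha})\ln 2}$. Multiplying by $B_l^{sub}$, inserting the trivial factor $\int_{-\infty}^{\infty}\delta(y)\,dy=1$, and extending the $r$-integration to $\mathbb{R}$ (harmless because $f_{d_{u,b}}$ vanishes outside $[0,r_{cell}]$), I recover exactly the integral $\vartheta(P_C,\mathcal{Q}(\lambda'_D,P_D,0),\delta(y),f_{d_{u,b}}(r))$ of~(\ref{function_eta}), which is~(\ref{Rc_exp}). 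I expect the main obstacle to be bookkeeping rather than conceptual: carefully justifying the repeated interchanges of integration and expectation (Fubini--Tonelli) and the PGFL identity, which hinge on $\psi'_D$ having a finite intensity measure on the bounded cell and on the integrands being nonnegative and bounded; the $\tau\mapsto s$ substitution and the identification $\mathcal{H}(x,\theta,0)=x^{2}$ are then routine.
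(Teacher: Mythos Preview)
Your proposal is correct and follows essentially the same route as the paper: the tail-integral representation of the expectation, the exponential-fading trick reducing coverage to a Laplace transform, the PGFL of the thinned PPP $\psi'_D$ over the cell disk to obtain $\mathcal{Q}(\lambda'_D,P_D,0)$, and the substitution $s=\tau P_C^{-1}r^{\alpha}$ (which the paper performs in a single step as $s=(2^{t}-1)P_C^{-1}d_{u,b}^{\alpha}$). Your additional remarks on Fubini--Tonelli, on the identification $\mathcal{H}(x,\theta,0)=x^{2}$, and on inserting $\int\delta(y)\,dy=1$ simply make explicit what the paper leaves implicit.
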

\begin{proof}
See Appendix \ref{Proof1}.
\end{proof}

\subsubsection{D2D Users}
Define the distance between the TX and the RX of D2D user $v$ as $d_{v,v}$. As a RX is randomly distributed in a circular area in which the center is the TX and the radius is $L_d$, the PDF for $d_{v,v}$ can be given by ${f_{d_{v,v}}}(r) = {2r}/{L_d^2},~0 \le r \le {L_d}$. Since each subchannel is utilized by one LTE user, the RX of D2D user $v$ is only interfered by one LTE user, denoted by $u$. Besides, D2D user $v$ is also interfered by other D2D users coexisting in the same subchannel. Since $r_{cell} \gg L_d$, the distance from the interfering node to the RX of D2D user $v$ can be approximated to the distance between the interfering node and the TX of D2D user $v$. Then, we can express the interference from LTE and D2D users as $I_C = \mathcal {I}(v,P_C,\{u\})$ and ${I_D} = \mathcal {I}(v,P_D,\psi'_D \setminus v)$, and the SINR of D2D user $v$ can be given by
\begin{equation}
\gamma_{D,v} = \frac{{{P_D}d_{v,v}^{ - \alpha }h}}{{{\sigma ^2} + {I_C} + {I_D}}}.
\end{equation}

Therefore, given the bandwidth $B_l^{sub}$, the throughput of D2D user $v$ can be calculated by
\begin{equation}\label{equ_Rd}
R_{D,v} = B_l^{sub}\mathbb{E}\left\{ {\log_2 \left( {1 + \gamma _{D,v} } \right)} \right\}.
\end{equation}
According to the uniform distribution of the D2D users, the average throughput of a D2D user can be defined as $R_D = R_{D,v}$.

\begin{proposition}\label{Rd}
The throughput of a D2D user over the licensed spectrum is given by
\begin{equation}\label{Rd_exp}
R_D = {B_l^{sub}} \vartheta \left(P_D, A_1 Q_1,  f_{d_{v,b}}(y), f_{d_{v,v}}(r) \right),
\end{equation}
where $A_1 =  \int_0^{2\pi }  {\int_0^{{r_{cell}}} {\frac{1}{{1 + s{P_C}\mathcal {H}^{ - \alpha /2}{{(x,\theta ,y)}}}}  \frac{{2x}}{{r_{cell}^2}} \frac{1}{{2\pi }}dx} }  d\theta$,
$Q_1 = \mathcal{Q}\left(\lambda'_D, P_D, y\right)$, $\lambda'_{D} = \frac{1}{K_l} \left( \lambda_{D} - \frac{1}{\pi r_{cell}^2}\right)$, ${f_{{d_{v,b}}}}(y) = \frac{{2y}}{{r_{cell}^2}},~0 \le y \le {r_{cell}}$.
$\vartheta(\cdot)$, $\mathcal{Q}(\cdot)$ and $\mathcal {H}(\cdot)$ are given in equations (\ref{function_eta}) to (\ref{function_H}).
\end{proposition}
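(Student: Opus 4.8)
The plan is to follow the same Laplace-transform route used for Proposition~\ref{Rc}, with two new ingredients: the reference point for every interference distance is now the D2D transmitter, which is an interior point of the cell at distance $d_{v,b}$ from the BS rather than the BS itself, and the denominator of the SINR additionally contains the contribution $I_C$ of the single co-channel LTE user.

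First I would condition on $d_{v,b}=y$, which fixes the geometry of both interfering point processes, and write $\gamma_{D,v}=S/N$ with $S=P_Dd_{v,v}^{-\alpha}h$ and $N=\sigma^2+I_C+I_D$. Since $S$ depends only on the D2D RX offset and its own fading, it is independent of $N$, so I would apply the standard identity
\begin{equation}
\mathbb{E}\left[\ln(1+S/N) \mid y\right]=\int_0^\infty\frac{\mathbb{E}\left[e^{-sN}\mid y\right]\left(1-\mathbb{E}[e^{-sS}]\right)}{s}\,ds,
\end{equation}
obtained from $\ln(1+S/N)=\int_0^\infty s^{-1}(e^{-sN}-e^{-s(S+N)})\,ds$, and then take the outer expectation over $y$ against $f_{d_{v,b}}(y)=2y/r_{cell}^2$.

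Next I would evaluate the three transforms. For the signal, $h\sim\exp(1)$ gives $\mathbb{E}[e^{-sS}]=\mathbb{E}_{d_{v,v}}[(1+sP_Dd_{v,v}^{-\alpha})^{-1}]$, hence $s^{-1}(1-\mathbb{E}[e^{-sS}])=\int_0^{L_d}(s+P_D^{-1}r^\alpha)^{-1}f_{d_{v,v}}(r)\,dr$, which is exactly the $r$-kernel appearing in $\vartheta(\cdot)$. For the LTE interference, conditioned on $y$ the lone co-channel LTE user is uniformly located in the cell, so averaging $(1+sP_Cd_{u,v}^{-\alpha})^{-1}$ over its polar position with $d_{u,v}^2=\mathcal{H}(x,\theta,y)$ and density $x/(\pi r_{cell}^2)$ yields $\mathbb{E}[e^{-sI_C}\mid y]=A_1$. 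For the D2D interference, I would invoke the probability generating functional of the PPP $\psi'_D\setminus v$ --- of density $\lambda'_D=\frac{1}{K_l}(\lambda_D-\frac{1}{\pi r_{cell}^2})$, where the subtraction excludes $v$ itself and the factor $1/K_l$ is the co-channel thinning --- together with $h\sim\exp(1)$ and the approximation $d_{\text{interferer},\text{RX}}\approx d_{\text{interferer},\text{TX}}$ valid because $L_d\ll r_{cell}$; this gives $\mathbb{E}[e^{-sI_D}\mid y]=\mathcal{Q}(\lambda'_D,P_D,y)=Q_1$. Because $I_C$ and $I_D$ are conditionally independent given $y$, $\mathbb{E}[e^{-sN}\mid y]=e^{-s\sigma^2}A_1Q_1$.

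Finally I would substitute these three factors into the identity, interchange the order of integration, divide by $\ln 2$ to pass from $\ln$ to $\log_2$, multiply by $B_l^{sub}$, and take the $y$-average; matching the resulting triple integral over $(s,y,r)$ term-by-term with the definition (\ref{function_eta}) of $\vartheta(\cdot)$ produces (\ref{Rd_exp}). The step I expect to be the main obstacle is the bookkeeping forced by the shifted reference point: unlike the LTE case, where the disk of interferers is centred on the receiver, the interferer distances here are the chord lengths $\sqrt{\mathcal{H}(x,\theta,y)}$, so both $A_1$ and $Q_1$ genuinely depend on $y$ and the $y$-average cannot be moved inside; additional care is needed to justify treating the co-channel LTE user as uniformly placed, the finite-network self-exclusion correction carried through $\mathcal{Q}(\cdot)$, and the near-field approximation for the D2D receiver.
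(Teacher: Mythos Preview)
Your proposal is correct and matches the paper's proof in every substantive step: conditioning on $d_{v,b}=y$, evaluating $\mathbb{E}[e^{-sI_C}\mid y]=A_1$ by averaging over the uniformly placed co-channel LTE user, evaluating $\mathbb{E}[e^{-sI_D}\mid y]=Q_1$ via the PGFL with the thinning and self-exclusion correction, invoking the $L_d\ll r_{cell}$ approximation so that interferer distances to the RX are replaced by distances to the TX, and then averaging over $y$. The only cosmetic difference is your use of the Frullani identity for $\ln(1+S/N)$ in place of the paper's tail-probability substitution $s=(2^t-1)P_D^{-1}d_{v,v}^{\alpha}$ inherited from Appendix~\ref{Proof1}; both routes produce the same $(s+P_D^{-1}r^\alpha)^{-1}$ kernel once the exponential fading of the signal is applied.
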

\begin{proof}
See Appendix \ref{Proof2}.
\end{proof}

\subsection{Throughput Analysis in the Unlicensed Spectrum}
In the unlicensed spectrum, the existence of Wi-Fi users cannot be neglected. In order to share the unlicensed spectrum with the Wi-Fi system, both LTE-U and D2D-U users adopt the LBE-based LBT mechanism. As such, each user cannot access the unlicensed subchannel all the time. Therefore, we introduce the medium access probability~(MAP)~\cite{BLK-2017} to represent the probability for each user to transmit on its allocated unlicensed subchannel.

Based on the random back-off mechanism of LTE-U, D2D-U and Wi-Fi users, their waiting counters can be characterized by HCPPs. In addition, we assume that a channel/subchannel is sensed busy by user $u$ when the received power at user $u$ from other existing active users exceeds an energy detection threshold. Thresholds for LTE-U, D2D-U, and Wi-Fi users are denoted as $P^{th}_{CU}$, $P^{th}_{DU}$ and $P^{th}_{W}$, respectively. Besides, we define the sets of active LTE-U, D2D-U and Wi-Fi users as $\psi^{act}_{CU}$, $\psi^{act}_{DU}$, and $\psi^{act}_{W}$ with densities $\lambda^{act}_{CU}$, $\lambda^{act}_{DU}$, and $\lambda^{act}_{W}$, respectively. According to the results in \cite{BLK-2017}, the MAPs for LTE-U, D2D-U, and Wi-Fi users can be respectively given by
\begin{align}
&p_{CU}^{act} = \mathbb{P}\left( {u \in \psi^{act}_{CU}|u \in \psi_{CU}} \right) = \mathcal{M}(P^{th}_{CU}, \frac{\lambda_{CU}}{K_u}, \frac{\lambda_{DU}}{K_u}, \lambda_{W}),\label{MAP_CU}\\
&p_{DU}^{act} = \mathbb{P}\left( {v \in \psi^{act}_{CU}|v \in \psi_{DU}} \right) = \mathcal{M}(P^{th}_{DU}, \frac{\lambda_{CU}}{K_u}, \frac{\lambda_{DU}}{K_u}, \lambda_{W}),\label{MAP_DU}\\
&p_{W}^{act} = \mathbb{P}\left( {a \in \psi^{act}_{W}|a \in \psi_{W}} \right) = \mathcal{M}(P^{th}_{W}, \lambda_{CU}, \lambda_{DU}, \lambda_{W}),\label{MAP_W}
\end{align}
where
\begin{equation}\label{MAP_exp}
\begin{split}
&\mathcal{M}(P^{th}, \lambda_{1}, \lambda_{2}, \lambda_{3}) = 1 - \\
&~\frac{\alpha}{2\pi }\int_0^\infty {\int_0^1 {\frac{{\sin (z\mathcal {S})}}{{z{e^{z\mathcal {C}}}}}{e^{ - {P^{th}}{{\left( {\frac{{z\alpha \mathcal {S}/(2\varepsilon {\pi ^2})}}{{{\lambda _1}P_C^{2/\alpha } + {\lambda _2}P_D^{2/\alpha } + {\lambda _3}P_W^{2/\alpha }}}} \right)}^{\alpha /2}}}}} } d\varepsilon dz,
\end{split}
\end{equation}
with $\mathcal {S} = \sin\left(\frac{2\pi}{\alpha}\right)$ and $\mathcal {C} = \cos\left(\frac{2\pi}{\alpha}\right)$. In addition, according to~\cite{BLK-2017}, we can also have $\lambda^{act}_{CU} = p_{CU}^{act} \lambda_{CU}$, $\lambda^{act}_{DU} = p_{DU}^{act} \lambda_{DU}$, and $\lambda^{act}_{W} = p_{W}^{act} \lambda_{W}$. In what follows, we will analyze the throughput for LTE-U, D2D-U, and Wi-Fi users, respectively.

\subsubsection{LTE-U Users}
Since different LTE-U users utilize orthogonal subchannels, the interference to LTE-U user $u$ only comes from active D2D-U users in the same subchannel and active Wi-Fi APs. We can express the SINR of LTE-U user $u$ as
\begin{equation}
\gamma_{{CU},u} = \frac{{P_C}d_{u,b}^{ - \alpha }h}{{{\sigma ^2} + {I_{DU}} + {I_W}}},
\end{equation}
where ${I_{DU}} = \mathcal {I}(b,P_D,{{\psi'_{DU}} \cap \psi_{DU}^{act}})$ and ${I_{W}} = \mathcal {I}(b,P_W,\psi _{W}^{act})$ are the interference from D2D-U and Wi-Fi users, respectively. Then, we can calculate the throughput for LTE-U user $u$ as
\begin{equation}\label{equ_Rcu}
R_{CU,u} = p^{act}_{CU}B_u^{sub}\mathbb{E}\left\{ {\log_2 \left( {1 + \gamma_{{CU},u} } \right)} \right\},
\end{equation}
and the expectation of the throughput of an LTE-U can be defined as $R_{CU} = R_{CU,u}$.

\begin{proposition}\label{Rcu}
The throughput of an LTE-U user over the unlicensed spectrum is given by
\begin{equation}\label{Rcu_exp}
{R_{CU}} = p^{act}_{CU} B_u^{sub} \vartheta\left(P_C, Q_1 Q_2, \delta(y), {f_{{d_{u,b}}}}(r)\right),
\end{equation}
where $Q_1 = \mathcal{Q}\left(p^{act}_{DU}\lambda'_{DU}, P_D,0\right)$, $Q_2 = \mathcal{Q}\left(p^{act}_{W}\lambda_{W},P_W,0\right)$, $\lambda'_{DU} = \frac{\lambda_{DU}}{K_u}$ and $\delta(y)$ is the Dirac function. $\vartheta(\cdot)$, $\mathcal{Q}(\cdot)$ and $\mathcal {H}(\cdot)$ are given in equations (\ref{function_eta}) to (\ref{function_H}).
\end{proposition}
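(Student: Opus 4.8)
~The plan is to mirror the structure of the proof of Proposition~\ref{Rc}, since the LTE-U throughput $R_{CU}$ has essentially the same form as $R_C$ with the additional multiplicative factor $p^{act}_{CU}$ and an extra interference term from the Wi-Fi users. Starting from \eqref{equ_Rcu}, I would first write the expectation $\mathbb{E}\{\log_2(1+\gamma_{CU,u})\}$ using the standard identity $\ln(1+\gamma)=\int_0^\infty \frac{1}{s}(1-e^{-s\gamma})\,ds$ together with the moment-generating-function trick: since $h\sim\exp(1)$ is independent of everything else, conditioning on $d_{u,b}=r$ and on the interference $\mathcal{I}=I_{DU}+I_W$ gives $\mathbb{E}_h\{e^{-s P_C r^{-\alpha} h}\}=\frac{1}{1+sP_C r^{-\alpha}}$, which after the usual manipulation produces the kernel $\frac{e^{-s\sigma^2}}{(s+P_C^{-1}r^\alpha)\ln 2}$ appearing inside $\vartheta(\cdot)$ in \eqref{function_eta}. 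What remains in the integrand is the Laplace transform of the total interference evaluated at $s$, i.e.\ $\mathbb{E}\{e^{-s(I_{DU}+I_W)}\}$.

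The key step is then to factor this Laplace transform. Because $\psi'_{DU}\cap\psi^{act}_{DU}$ and $\psi^{act}_{W}$ are independent point processes (the active D2D-U users in $u$'s subchannel and the active Wi-Fi APs), the Laplace transform factors as $\mathbb{E}\{e^{-sI_{DU}}\}\cdot\mathbb{E}\{e^{-sI_W}\}$. For each factor I would invoke the probability generating functional of a homogeneous PPP: for a PPP of density $\lambda$ on the disc of radius $r_{cell}$ with marks $h\sim\exp(1)$, the interference Laplace transform at a point is $\exp\!\big(-\lambda\int_0^{2\pi}\!\int_0^{r_{cell}}\big(1-\mathbb{E}_h\{e^{-sP d^{-\alpha}h}\}\big)x\,dx\,d\theta\big)$, and since $\mathbb{E}_h\{e^{-sPd^{-\alpha}h}\}=\frac{1}{1+sPd^{-\alpha}}$ one gets $1-\frac{1}{1+sPd^{-\alpha}}=\frac{1}{1+(sP)^{-1}d^\alpha}$, matching the exponent in the definition \eqref{function_Q} of $\mathcal{Q}$. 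Here the relevant density for the D2D-U interferers is $p^{act}_{DU}\lambda'_{DU}$ with $\lambda'_{DU}=\lambda_{DU}/K_u$ (thinning by the subchannel assignment and by the MAP, using $\lambda^{act}_{DU}=p^{act}_{DU}\lambda_{DU}$ from \cite{BLK-2017}), and for the Wi-Fi interferers it is $p^{act}_{W}\lambda_{W}$. Since the BS (the intended receiver) is the reference point and $l=0$ in $\mathcal{H}(x,\theta,0)=x^2$, the two factors are exactly $Q_1=\mathcal{Q}(p^{act}_{DU}\lambda'_{DU},P_D,0)$ and $Q_2=\mathcal{Q}(p^{act}_{W}\lambda_{W},P_W,0)$.

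Finally I would assemble the pieces: substituting the factored Laplace transform $Q_1Q_2$ for the ``$Q$'' slot of $\vartheta$, using the Dirac function $\delta(y)$ to collapse the auxiliary $y$-integral (there is no random distance between the reference receiver and itself, just as in Proposition~\ref{Rc}), and using $f_{d_{u,b}}(r)=2r/r_{cell}^2$ for the BS-to-LTE-U distance, the triple integral in \eqref{function_eta} becomes precisely $\vartheta(P_C,Q_1Q_2,\delta(y),f_{d_{u,b}}(r))$. Multiplying by $p^{act}_{CU}B_u^{sub}$ from \eqref{equ_Rcu} yields \eqref{Rcu_exp}.

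I expect the main obstacle to be the rigorous justification of the interference-factoring step in the presence of the \emph{active} (thinned/retained) point processes: strictly speaking $\psi^{act}_{DU}$ and $\psi^{act}_W$ arise from a dependent thinning (an HCPP-type retention based on mutual received powers), so they are not exactly PPPs. The proof must rely on the approximation from \cite{BLK-2017} that treats the active users as a PPP with the reduced density $\lambda^{act}=p^{act}\lambda$ and preserves independence across the different user classes; making this substitution precise (or clearly flagging it as the modeling approximation already adopted in Section~\ref{Throughput Analysis}) is the delicate point, after which the remaining computation is the same routine PPP Laplace-transform manipulation used in Proposition~\ref{Rc}.
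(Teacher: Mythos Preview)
Your proposal is correct and follows essentially the same route as the paper, which explicitly omits the proof of Proposition~\ref{Rcu} and states that $R_{CU}$ is obtained from \eqref{equ_Rcu} ``by the same methods in Appendices~\ref{Proof1} and~\ref{Proof2}.'' One minor remark: the paper's Appendix~\ref{Proof1} reaches the kernel $\frac{e^{-s\sigma^2}}{(s+P_C^{-1}r^\alpha)\ln 2}$ via the tail-probability identity $\mathbb{E}[X]=\int_0^\infty \mathbb{P}[X>t]\,dt$ together with the substitution $s=(2^t-1)/(P_C d_{u,b}^{-\alpha})$, rather than the $\ln(1+\gamma)=\int_0^\infty s^{-1}(1-e^{-s\gamma})\,ds$ identity you mention; but this is a cosmetic difference, and your factoring of the interference Laplace transform into $Q_1Q_2$ via the PGFL, your identification of the thinned densities $p^{act}_{DU}\lambda'_{DU}$ and $p^{act}_W\lambda_W$, and your observation about the PPP approximation of the HCPP-type active sets all match the paper's treatment exactly.
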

The detailed proof of Proposition~\ref{Rcu} is omitted, because $R_{CU}$ can be derived from equation~(\ref{equ_Rcu}) by the same methods in Appendices~\ref{Proof1} and~\ref{Proof2}.

\subsubsection{D2D-U Users}
Due to the orthogonal subchannel utilization of LTE-U users, D2D-U user $v$ may only be interfered with one LTE-U user in the same subchannel, denoted by $u$. The interference to D2D-U user $v$ also comes from other co-channel active D2D-U users and all active Wi-Fi users. Then, the SINR of D2D-U user $v$ can be expressed as
\begin{equation}
\gamma_{{DU},v} = \frac{{{P_D}d_{v,v}^{ - \alpha }h}}{{{\sigma ^2} + {I_{CU}} + {I_{DU}} + {I_W}}},
\end{equation}
where ${I_{CU}} = \mathcal {I}(v,P_C,{\left\{ u \right\} \cap \psi_{CU}^{act}})$, ${I_{DU}} = \mathcal {I}(v,P_D,\psi'_{DU} \cap \psi_{DU}^{act}\setminus v)$ and ${I_{W}} = \mathcal {I}(v,P_W,\psi _{W}^{act})$ are the interference from LTE-U, D2D-U, and Wi-Fi users, respectively. Thus, the throughput for D2D-U user $v$ can be given by
\begin{equation}\label{equ_Rdu}
R_{{DU},v} = p^{act}_{DU}B_u^{sub}\mathbb{E}\left\{ {\log_2 \left( {1 + \gamma _{{DU},v} } \right)} \right\},
\end{equation}
and the expectation of the throughput of a D2D-U can be defined as $R_{DU} = R_{{DU},v}$.

\begin{proposition}\label{Rdu}
The throughput of a D2D-U user over the unlicensed spectrum is given by
\begin{equation}\label{Rdu_exp}
{R_{DU}} = p^{act}_{DU} B_u^{sub} \vartheta\left(P_D, A_2 Q_1 Q_2, {f_{{d_{v,b}}}}(y), {f_{{d_{v,v}}}}(r)\right),
\end{equation}
where $A_2 = p^{act}_{CU}\int_0^{2\pi } {\int_0^{{r_{cell}}} {\frac{1}{{1 + s{P_C}\mathcal {H}^{ - \alpha /2}{{(x,\theta ,y)}}}}  \frac{{2x}}{{r_{cell}^2}}  \frac{1}{{2\pi }}dx} }  d\theta + ( 1 - p^{act}_{CU} ) $, $Q_{1} = \mathcal{Q}\left(p^{act}_{DU}\lambda'_{DU}, P_D,y\right)$,
$\lambda'_{DU} = \frac{\lambda_{DU}}{K_u} - \frac{1}{K_u\pi r_{cell}^2}$, ${f_{{d_{v,b}}}}(y) = \frac{{2y}}{{r_{cell}^2}},~0 \le y \le {r_{cell}}$, $Q_2 = \mathcal{Q}\left(p^{act}_{W}\lambda_{W},P_W,y\right)$. $\vartheta(\cdot)$, $\mathcal{Q}(\cdot)$ and $\mathcal {H}(\cdot)$ are given from (\ref{function_eta}) to (\ref{function_H}).
\end{proposition}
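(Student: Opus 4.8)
The plan is to mirror the derivation of Proposition~\ref{Rd} in Appendix~\ref{Proof2}, with the modifications forced by the unlicensed setting: the medium access probabilities of~(\ref{MAP_CU})--(\ref{MAP_exp}) and the extra Wi-Fi interference term. Starting from~(\ref{equ_Rdu}), I would first rewrite the expected spectral efficiency with the Frullani-type identity $\ln(1+S/T)=\int_0^\infty \frac{e^{-sT}-e^{-s(S+T)}}{s}\,ds$, taking $S=P_D d_{v,v}^{-\alpha}h$ for the useful signal and $T=\sigma^2+I_{CU}+I_{DU}+I_W$ for noise-plus-interference. Conditioning on the tagged pair's geometry $d_{v,v}=r$, $d_{v,b}=y$ and integrating out the fading $h$ of the desired link — which is independent of $T$ — collapses the integrand to $\frac{e^{-s\sigma^2}}{s+P_D^{-1}r^\alpha}\,\mathbb{E}\{e^{-sI_{CU}}\mid y\}\,\mathbb{E}\{e^{-sI_{DU}}\mid y\}\,\mathbb{E}\{e^{-sI_W}\mid y\}$, where the factorization uses the conditional mutual independence of the three interfering processes. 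This already reveals the shape of $\vartheta(\cdot)$ in~(\ref{function_eta}) (the $\ln 2$ there absorbing the conversion from $\ln$ to $\log_2$), so all that remains is to evaluate the three Laplace transforms and to average over $y$ and $r$.

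For the co-channel D2D-U self-interference $I_{DU}$ and the Wi-Fi interference $I_W$, I would apply the PPP probability generating functional exactly as in Appendix~\ref{Proof2}. Using the approximation $d_{i,\mathrm{RX}(v)}\approx d_{i,\mathrm{TX}(v)}$ (valid because $L_d\ll r_{cell}$), an interferer at distance $x$ and angle $\theta$ from the BS sits at squared distance $\mathcal{H}(x,\theta,y)$ from the tagged receiver, so each transform reduces to $\exp\!\bigl(-\lambda\int_0^{2\pi}\!\int_0^{r_{cell}}\frac{x\,dx\,d\theta}{1+(sP)^{-1}\mathcal{H}^{\alpha/2}(x,\theta,y)}\bigr)=\mathcal{Q}(\lambda,P,y)$ of~(\ref{function_Q}). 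The relevant densities are the \emph{active} ones: the co-channel D2D-U interferers form a thinned process of density $p^{act}_{DU}\lambda'_{DU}$ with $\lambda'_{DU}=\frac{\lambda_{DU}}{K_u}-\frac{1}{K_u\pi r_{cell}^2}$, the subtraction being the reduced-Palm correction that removes the tagged transmitter (as in Appendix~\ref{Proof2}), and the active Wi-Fi APs have density $p^{act}_W\lambda_W$. This produces the factors $Q_1=\mathcal{Q}(p^{act}_{DU}\lambda'_{DU},P_D,y)$ and $Q_2=\mathcal{Q}(p^{act}_W\lambda_W,P_W,y)$.

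The cross-tier term $I_{CU}$ needs separate treatment, since orthogonality among LTE-U users means at most one LTE-U user shares the subchannel with the tagged D2D-U pair. That user is active with probability $p^{act}_{CU}$ and, when active, is uniformly located in the cell; averaging its per-interferer factor $\frac{1}{1+sP_C\mathcal{H}^{-\alpha/2}(x,\theta,y)}$ over that uniform position (distance PDF $\frac{2x}{r_{cell}^2}$, angle density $\frac{1}{2\pi}$) and adding the ``silent'' contribution $(1-p^{act}_{CU})$ gives precisely $\mathbb{E}\{e^{-sI_{CU}}\mid y\}=A_2$. Substituting $A_2Q_1Q_2$ for the product of the three transforms, and then integrating against $f_{d_{v,b}}(y)=\frac{2y}{r_{cell}^2}$ and $f_{d_{v,v}}(r)=\frac{2r}{L_d^2}$ over the $s,y,r$ variables, reconstructs $\vartheta(P_D,A_2Q_1Q_2,f_{d_{v,b}}(y),f_{d_{v,v}}(r))$; multiplying by $p^{act}_{DU}B_u^{sub}$ yields~(\ref{Rdu_exp}).

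The main obstacle, as in Propositions~\ref{Rc}--\ref{Rcu}, is the bookkeeping around the \emph{active} point processes rather than any single hard computation. Strictly, $\psi^{act}_{DU}$ and $\psi^{act}_W$ are hard-core processes, not Poisson, so the PGFL step relies on the approximation from~\cite{BLK-2017} that an active process behaves like a PPP of density $p^{act}\lambda$; one must also keep straight that the tagged D2D-U pair is conditioned on being active — hence the prefactor $p^{act}_{DU}$ — while its own signal is excluded from $I_{DU}$ through the reduced density, and that the independence used to split $\mathbb{E}\{e^{-sT}\mid y\}$ into a product holds only after conditioning on the tagged geometry. Once these points are settled, the remaining manipulations are a routine repetition of Appendices~\ref{Proof1} and~\ref{Proof2}.
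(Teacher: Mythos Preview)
Your proposal is correct and matches the paper's intended route: the paper omits the proof of Proposition~\ref{Rdu} entirely, stating only that it follows by the same method as Appendices~\ref{Proof1} and~\ref{Proof2}, and your treatment of the three interference Laplace transforms (the mixture $A_2$ for the single, possibly-active LTE-U interferer; the PGFL with active thinned densities for $Q_1$ and $Q_2$; the reduced-Palm subtraction in $\lambda'_{DU}$) is exactly that method. The only cosmetic difference is that the paper reaches the $\vartheta$-integrand via the CCDF identity $\mathbb{E}[X]=\int_0^\infty \mathbb{P}(X>t)\,dt$ followed by the substitution $s=(2^t-1)/(P_D d_{v,v}^{-\alpha})$, whereas you use the Frullani representation of $\ln(1+S/T)$; both collapse to the same $\frac{e^{-s\sigma^2}}{(s+P_D^{-1}r^\alpha)\ln 2}$ kernel after averaging out the desired-link fading, so this is not a substantive deviation.
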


Since we can derive $R_{DU}$ from equation (\ref{equ_Rdu}) through the same method introduced in Appendices~\ref{Proof1} and~\ref{Proof2}, the detailed proof for Proposition~\ref{Rdu} is omitted.

\subsubsection{Wi-Fi Users}
As for Wi-Fi users, let us define the distance between Wi-Fi user $w$ and its corresponding AP $a$ as $d_{w,a}$, whose PDF can be given by ${f_{d_{w,a}}}(r) = {2r}/{L_w^2},~ 0 \le r \le {L_w}$. Different from LTE-U or D2D-U users, Wi-Fi users occupy the whole unlicensed band for data transmissions. Therefore, Wi-Fi user $w$ may receive the interference from all active LTE-U and D2D-U users as well as other active Wi-Fi APs. The SINR of Wi-Fi user $w$ can be given by
\begin{equation}
\gamma_{{W},w} = \frac{{{P_W}d_{w,a}^{ - \alpha }h}}{{{\sigma ^2} + {I_{CU}} + {I_{DU}} + {I_W}}},
\end{equation}
where ${I_{CU}} = \mathcal {I}(a,P_C,\psi_{CU}^{act})$, ${I_{DU}} = \mathcal {I}(a,P_D,\psi_{DU}^{act})$ and ${I_{W}} = \mathcal {I}(a,P_W,\psi _{W}^{act}\setminus a)$ are the interference from LTE-U, D2D-U and Wi-Fi users, respectively. Then, we can calculate the throughput for Wi-Fi user $w$ by
\begin{equation}\label{equ_Rw}
R_{W,w} = p^{act}_{W}B_u \mathbb{E}\left\{ {\log_2 \left( {1 + \gamma_{{W},w} } \right)} \right\}.
\end{equation}
Finally, the expectation of the throughput of a Wi-Fi user can be defined as $R_{W} = R_{W,w}$.

\begin{proposition}\label{Rw}
The throughput of a Wi-Fi user over the unlicensed spectrum is given by
\begin{equation}\label{Rw_exp}
{R_{W}} = p^{act}_{W} B_u \vartheta\left(P_W, Q_1 Q_2 Q_3, {f_{{d_{a,b}}}}(y), {f_{{d_{w,a}}}}(r)\right),
\end{equation}
where $Q_1 = \mathcal{Q}\left(p^{act}_{CU}\lambda_{CU}, P_C,y\right)$, $Q_2 = \mathcal{Q}\left(p^{act}_{DU}\lambda_{DU}, P_D, y\right)$, $Q_3 = \mathcal{Q}\left(p^{act}_{W}\lambda'_{W},P_W,y\right)$, $\lambda'_{W} = \lambda _{W} - \frac{1}{\pi r_{cell}^2}$, ${f_{{d_{a,b}}}}(y) = \frac{{2y}}{{r_{cell}^2}},~0 \le y \le {r_{cell}}$. $\vartheta(\cdot)$, $\mathcal{Q}(\cdot)$ and $\mathcal {H}(\cdot)$ are given in equations (\ref{function_eta}) to (\ref{function_H}).
\end{proposition}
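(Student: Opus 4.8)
The plan is to follow the same recipe as the proofs of Propositions~\ref{Rc} and~\ref{Rd} in Appendices~\ref{Proof1} and~\ref{Proof2}, now applied to a typical Wi-Fi receiver $w$ that is interfered by three mutually independent point processes: the active LTE-U users $\psi_{CU}^{act}$, the active D2D-U users $\psi_{DU}^{act}$, and the other active Wi-Fi APs $\psi_W^{act}\setminus a$. First I would express the ergodic rate via the layer-cake identity
\begin{equation*}
\mathbb{E}\left\{\log_2(1+\gamma_{W,w})\right\} = \int_0^\infty \frac{\mathbb{P}(\gamma_{W,w}>v)}{(1+v)\ln 2}\,dv,
\end{equation*}
and condition on the TX--RX distance $r=d_{w,a}$ and on the distance $y=d_{a,b}$ from the serving AP to the BS. Since $h\sim\exp(1)$, conditioning further on the aggregate interference $I=I_{CU}+I_{DU}+I_W$ gives $\mathbb{P}(\gamma_{W,w}>v\mid r,y,I)=\exp\!\big(-\tfrac{v r^\alpha}{P_W}(\sigma^2+I)\big)$, hence
\begin{equation*}
\mathbb{P}(\gamma_{W,w}>v\mid r,y) = e^{-\frac{v r^\alpha \sigma^2}{P_W}}\,\mathcal{L}_{I_{CU}}\!\big(\tfrac{v r^\alpha}{P_W}\big)\,\mathcal{L}_{I_{DU}}\!\big(\tfrac{v r^\alpha}{P_W}\big)\,\mathcal{L}_{I_{W}}\!\big(\tfrac{v r^\alpha}{P_W}\big),
\end{equation*}
where the product uses the mutual independence of the three interferer processes, a consequence of the HCPP-to-PPP thinning approximation justified through~\cite{BLK-2017}.

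The second step is to evaluate each Laplace transform $\mathcal{L}_{I_X}(s)=\mathbb{E}\{e^{-sI_X}\}$ with the probability generating functional of a homogeneous PPP restricted to the cell disk. Because $L_w\ll r_{cell}$, I would approximate the distance from each interferer to $w$ by its distance to the serving AP $a$; writing an interferer's location in polar coordinates $(x,\theta)$ about the BS, the law of cosines gives its squared distance to $a$ as $\mathcal{H}(x,\theta,y)$. Averaging over each interferer's exponential fading turns $1-\mathbb{E}_h\{e^{-sPd^{-\alpha}h}\}$ into $\big(1+(sP)^{-1}\mathcal{H}^{\alpha/2}(x,\theta,y)\big)^{-1}$, so the PGFL yields $\mathcal{L}_{I_{CU}}(s)=\mathcal{Q}(p^{act}_{CU}\lambda_{CU},P_C,y)=Q_1$, $\mathcal{L}_{I_{DU}}(s)=\mathcal{Q}(p^{act}_{DU}\lambda_{DU},P_D,y)=Q_2$, and $\mathcal{L}_{I_{W}}(s)=\mathcal{Q}(p^{act}_{W}\lambda'_{W},P_W,y)=Q_3$; here $p^{act}_{CU}\lambda_{CU}$ and $p^{act}_{DU}\lambda_{DU}$ are the MAP-thinned active densities from~\eqref{MAP_CU}--\eqref{MAP_W}, while $\lambda'_{W}=\lambda_W-\tfrac{1}{\pi r_{cell}^2}$ performs the one-point self-exclusion correction (removing the tagged AP $a$ from its own interfering set) before MAP thinning, exactly as in Propositions~\ref{Rd} and~\ref{Rdu}.

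Finally, I would change variables $s=v r^\alpha/P_W$ at fixed $r$, under which $\tfrac{dv}{(1+v)\ln 2}=\tfrac{ds}{(s+P_W^{-1}r^\alpha)\ln 2}$, so that the conditional expectation collapses to $\int_0^\infty \frac{e^{-s\sigma^2}Q_1Q_2Q_3}{(s+P_W^{-1}r^\alpha)\ln 2}\,ds$; de-conditioning by integrating against $f_{d_{a,b}}(y)=2y/r_{cell}^2$ and $f_{d_{w,a}}(r)=2r/L_w^2$ reproduces precisely $\vartheta(P_W,Q_1Q_2Q_3,f_{d_{a,b}}(y),f_{d_{w,a}}(r))$, and multiplying by $p^{act}_W B_u$ gives~\eqref{Rw_exp}. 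The main obstacle is bookkeeping rather than a new idea: one must (i) retain the receiver-location variable $y$ inside the Laplace transforms, since the three PPPs live on a finite disk and the interference field is therefore not stationary --- this is exactly why $\mathcal{H}(x,\theta,y)$ and the extra $y$-integration appear; (ii) carry all three interference types with their correct MAP-thinned densities; and (iii) argue that the self-exclusion of AP $a$ and the $L_w\ll r_{cell}$ distance approximation contribute only negligible error, as in the earlier appendices.
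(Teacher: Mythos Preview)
Your proposal is correct and follows essentially the same route the paper indicates (and that Appendices~\ref{Proof1} and~\ref{Proof2} carry out in detail): express the ergodic rate via the tail integral, exploit $h\sim\exp(1)$ to reduce to a product of interference Laplace transforms, evaluate each via the PGFL on the cell disk with the law-of-cosines distance $\mathcal{H}(x,\theta,y)$, apply the MAP-thinned densities together with the one-point self-exclusion $\lambda'_W$, and then perform the change of variable that produces the $\vartheta(\cdot)$ integral. The only cosmetic difference is that you pass through the intermediate variable $v$ before substituting $s=vr^\alpha/P_W$, whereas the paper substitutes $s=(2^t-1)/(P_W d_{w,a}^{-\alpha})$ directly; the two are equivalent.
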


The detailed proof for Proposition~\ref{Rw} is also omitted, since the derivation of $R_W$ is similar to $R_{CU}$ and $R_{DU}$.

\section{Problem Formulation and Algorithm Design}
\label{Problem Formulation and Algorithm Design}

In this section, we aim to maximize the throughput of the system by optimizing the densities of four kinds of users. Since the problem is a global nonlinear optimization problem, which is NP-hard and difficult to solve, we design a SQP-based algorithm to obtain the suboptimal solutions of this problem.

\subsection{Problem Formulation}
In our system, each LTE/D2D user can transmit on either licensed or unlicensed spectrum for communications. Therefore, our objective is to maximize the throughput of the system by optimizing $\bm{\lambda} = [\lambda_C, \lambda_D, \lambda_{CU}, \lambda_{DU}]^T$. The throughput of the system is defined as the total throughput of all users over both licensed and unlicensed bands, which is given by $R_{total}(\bm{\lambda}) = S_{cell}\left({\lambda _C}{R_C} + {\lambda _D}{R_D} + {\lambda _{CU}}{R_{CU}} + {\lambda _{DU}}{R_{DU}} \right)$. Here, $S_{cell} = \pi r_{cell}^2$ is the area of the cell. We assume that the densities of all LTE and D2D users in the system are $\lambda_C^{all}$ and $\lambda_D^{all}$, respectively. Besides, to guarantee the QoS for different users, we also define the throughput requirements for LTE/LTE-U, D2D/D2D-U, and Wi-Fi users as ${R_{C}^{th}}$, ${R_{D}^{th}}$, and ${R_{W}^{th}}$, respectively. Then, the optimization problem can be formulated as follow:
\begin{subequations}\label{P1}
\begin{align}
\textbf{P1:}~~\mathop {\max }\limits_{\bm{\lambda}}&~~ R_{total}(\bm{\lambda}) \label{P1_Objective} \\
s.t.&~~{\lambda_C} + {\lambda_{CU}} \leq \lambda_C^{all}, \label{P1_constraint_ccu}\\
&~~{\lambda_D} + {\lambda_{DU}} \leq \lambda_D^{all}, \label{P1_constraint_ddu}\\
&~~{R_{C}} \ge {R_{C}^{th}},\quad{R_{CU}} \ge {R_{C}^{th}},\label{P1_constraint_Rccu}\\
&~~{R_{D}} \ge {R_{D}^{th}},\quad{R_{DU}} \ge {R_{D}^{th}},\label{P1_constraint_Rddu}\\
&~~{R_{W}} \ge {R^{th}_{W}},\label{P1_constraint_Rw}\\
&~~{\lambda_C}\geq 0,~{\lambda_D}\geq 0,~{\lambda_{CU}}\geq 0,~{\lambda_{DU}}\geq 0.\label{P1_constraint nn}
\end{align}
\end{subequations}
Constraint~(\ref{P1_constraint_ccu}) and (\ref{P1_constraint_ddu}) are the density requirements for LTE/LTE-U and D2D/D2D-U users, respectively. Constraints (\ref{P1_constraint_Rccu})-(\ref{P1_constraint_Rw}) are the QoS requirements for LTE/LTE-U, D2D/D2D-U, and Wi-Fi users, respectively. Constraint~(\ref{P1_constraint nn}) indicates the non-negativity of the densities.

\subsection{Spectrum Access Algorithm Design}
In this part, we propose a SQP-based spectrum access algorithm to obtain the suboptimal solutions of problem~(P1). We first transform the optimization problem into quadratic programming~(QP) problems. Then, a SQP-based method is presented to solve these QP problems.

\subsubsection{Problem Transformation}
Note that problem~(P1) is a non-convex optimization problem with nonlinear constraints, it is difficult to find the optimal solution. Motivated by the SQP technique, which is efficient to tackle with certain classes of large scale nonlinear programming problems~\cite{BD,FSTQ-2016}, we first transform problem~(P1) into a series of QP problems at feasible points, and then solve these problems to obtain the suboptimal solution iteratively.

More specifically, we first rewrite the optimization problem in the following form:
\begin{equation}\label{USP_standard}
\begin{split}
\mathop {\min } \limits_{ \bm{\lambda} }  ~~&  f(\bm{\lambda})\\
s.t. ~~& g_i(\bm{\lambda}) \leq 0,\quad(i = 1,...,N)
\end{split}
\end{equation}
where $\bm{\lambda}$ is the vector of optimization variables and $N$ is the number of constraints. Then, given a feasible iteration point $\bm{\lambda}^k$, we can approximate the objective function and constraints into quadratic terms of Taylor series at $\bm{\lambda}^k$, i.e.,
\begin{equation}\label{SQP_Taylor}
\begin{split}
\mathop {\min } \limits_{ \emph{\textbf{S}} } ~~&  f(\bm{\lambda}^k) + [\nabla f(\bm{\lambda}^k)]^T \emph{\textbf{S}} + \frac{1}{2}\emph{\textbf{S}} ^T \emph{\textbf{H}}^k \emph{\textbf{S}} \\
s.t. ~~& g_i(\bm{\lambda}^k) + [\nabla g_i(\bm{\lambda}^k)]^T \emph{\textbf{S}} \leq 0, \quad (i = 1,...,N)
\end{split}
\end{equation}
where $\emph{\textbf{H}}^k = \nabla^2 f( \bm{\lambda}^k ) $ denotes an approximation of the Hessian matrix of the objective function at feasible point $\bm{\lambda}^k$ and $\emph{\textbf{S}} = \bm{\lambda} - \bm{\lambda}^k$ denotes the search direction~\cite{CYQXZ}. We define problem~(\ref{SQP_Taylor}) as the QP problem at point $\bm{\lambda}^k$, since its objective function and constraints are expressed in quadratic polynomials.

\subsubsection{SQP-based Algorithm}
After transforming problem~(P1) into QP problems, we introduce a SQP-based method to obtain their solutions.
Denoting the optimal solution of problem~(\ref{SQP_Taylor}) as $\emph{\textbf{S}}^*$, we can calculate the next iteration point $\bm{\lambda}^{k+1}$ by
\begin{equation}\label{line_search}
\bm{\lambda}^{k+1} = \bm{\lambda}^k + \beta^k \emph{\textbf{S}}^*
\end{equation}
where $\beta^k$ is the step length which can be calculated by the line search procedure to achieve a sufficient decrease in objective function for convergence~\cite{CYQXZ}. In addition, the matrix $\emph{\textbf{H}}$ in equation~(\ref{SQP_Taylor}) can be updated by the Broyden-Fletcher-Goldfarb-Shanno~(BFGS) quasi-Newton method \cite{LV} below
\begin{equation}\label{BFGS}
\emph{\textbf{H}}^{k+1} = \emph{\textbf{H}}^{k} + \frac{ \Delta q^{k+1} [\Delta q^{k+1}]^T }{ [\Delta q^{k+1}]^T [\Delta \bm{\lambda}^{k+1}] } - \frac{\emph{\textbf{H}}^{k} \Delta \bm{\lambda}^{k+1} [\Delta \bm{\lambda}^{k+1}]^T \emph{\textbf{H}}^{k}}{ [\Delta \bm{\lambda}^{k+1}]^T \emph{\textbf{H}}^{k} \Delta \bm{\lambda}^{k+1} }
\end{equation}
where $\Delta \bm{\lambda}^{k+1} = \bm{\lambda}^{{k+1}}-\bm{\lambda}^{k}$ and $\Delta q^{k+1} = \nabla f(\bm{\lambda}^{{k+1}}) - \nabla f(\bm{\lambda}^{k})$.
In each iteration, we update $\bm{\lambda}$ and $\emph{\textbf{H}}$ according to equations~(\ref{line_search}) and~(\ref{BFGS}) until
\begin{equation}\label{terminal_condition}
|{f(\bm{\lambda}^{k+1}) - f(\bm{\lambda}^{k})}|< \tau
\end{equation}
where $\tau$ denotes the error tolerance threshold. The suboptimal solution of problem~(\ref{USP_standard}) is then obtained as $\bm{\lambda}^* = \bm{\lambda}^{k+1}$.
The procedure of this SQP-based method can be summarized in Algorithm~\ref{SQP_Algorithm}.

\begin{algorithm}[!t]
\caption{SQP-based method.}\label{SQP_Algorithm}
\KwIn{Initial value $\bm{\lambda}^0 $. }
\KwOut{Suboptimal solution $\bm{\lambda}^* $. }
\Begin
{
Set $k = 0,~\emph{\textbf{H}}^0 = \textbf{\emph{I}}~$(identity matrix);\\
\Repeat{$\bm{\lambda}^k$ satisfies the terminal condition~(\ref{terminal_condition})}
{
Simplify $f(\bm{\lambda})$ and $g_i(\bm{\lambda}), (i = 1,...,N)$ as quadratic terms of Taylor series at $\bm{\lambda}^k$;\\
Find the optimal search direction $\emph{\textbf{S}}^*$;\\
Update $\bm{\lambda}$ according to equation (\ref{line_search});\\
Update $\emph{\textbf{H}}$ according to equation (\ref{BFGS});\\
$k := k + 1$;\\
}
Terminate with the final output $\bm{\lambda}^* $.\\
}
\end{algorithm}

Since the performance of Algorithm~\ref{SQP_Algorithm} is greatly affected by its initial values, it is important to set an appropriate value for $\bm{\lambda}^{0}$. Therefore, instead of random or all-zero initialization, we adopt the greedy policy to find approximate solutions of problem~(P1) and then set them as initial value $\bm{\lambda}^{0}$, which can effectively accelerate the convergence of our proposed algorithm. According to the greedy policy, each user sequentially accesses to the spectrum which achieves the higher total throughput while satisfying all constraints. When the spectrum access of a user may decrease the total throughput in the system, this user cannot access the corresponding spectrum.

\section{Performance Analysis}
\label{Performance Analysis}

In this section, we first present the analysis on the convergence and the complexity of our proposed spectrum access algorithm. Then, we characterize the throughput regions for the system to analyze its performance and illustrate the spectrum access issue.

\subsection{Algorithm Analysis}
\subsubsection{Convergence}
According to equation~(\ref{line_search}), the optimal solution of the QP problem at $\bm{\lambda}^k$ is given by $\bm{\lambda}^{k+1}$. Thus, according to the property of QP, we have $f(\bm{\lambda}^{k+1}) \leq f(\bm{\lambda}^k)$, implying that the value of the objective function is non-increasing after each iteration. Besides, since the optimal value of problem~(P1) is $f(\bm{\lambda}^*)$, we also have $f(\bm{\lambda}^k) \geq f(\bm{\lambda}^*)$, $\forall k$, which indicates that the value of the objective function is lower bounded by a finite value. Therefore, after finite iterations, the terminal condition~(\ref{terminal_condition}) will be satisfied, which ensures the convergence of our proposed algorithm.

\subsubsection{Complexity}
Since we adopt the greedy policy to obtain initial value $\bm{\lambda}^{0}$, the computational complexity of initialization is given by $O(2M_{C}+2M_{D})$, in which $M_C = \pi r_{cell}^2 \lambda^{all}_C$ and $M_D = \pi r_{cell}^2 \lambda^{all}_D$ denote the number of all LTE and D2D users in the system. When the system contains the same number of LTE and D2D users, e.g., $M_C = M_D = M$, the computational complexity of initialization is $O(4M) = O(M)$. Besides, a QP problem should be solved in each iteration of the SQP algorithm. As referred in~\cite{S_SG}, the computational complexity for each QP problem is $O(V^3)$, where $V = 2$ corresponds to the number of variables.

\subsection{Throughput Region Analysis}
In this part, we characterize the throughput regions for LTE/LTE-U and D2D/D2D-U users over licensed and unlicensed spectrum, respectively. Assume the total density of LTE and LTE-U users in the system is no higher than $\lambda_C^{all}$. Besides, the densities of D2D and D2D-U users in the system is given by $\lambda_D$ and $\lambda_{DU}$. When the density of LTE users is $\lambda_C$, we can calculate the total throughput of LTE users as $\widetilde{R}_C = S_{cell}\lambda_{C} R_C$. Then, the total throughput of LTE-U users, whose density is $\lambda_{CU}$, can be calculated by $\widetilde{R}_{CU} = S_{cell}\lambda_{CU} R_{CU}$. Therefore, we can depict the throughput region for LTE/LTE-U users as
\begin{equation}
\Omega_{C}(\lambda_C^{all}) =  \left\{ (\widetilde{R}_C,\widetilde{R}_{CU})\in \mathbb{R}^2_{+}|\lambda_{C} + \lambda_{CU} \leq \lambda_C^{all} \right\}.
\end{equation}
Likewise, we assume that the maximum total density of D2D and D2D-U users in the system is $\lambda_D^{all}$. As such, given $\lambda_C$ and $\lambda_{CU}$, we can calculate the total throughput for D2D and D2D-U users, with densities $\lambda_D$ and $\lambda_{DU}$, as $\widetilde{R}_D = S_{cell}\lambda_{D} R_D$ and $\widetilde{R}_{DU} = S_{cell}\lambda_{DU} R_{DU}$, respectively. Therefore, the throughput region for D2D/D2D-U users is given by
\begin{equation}
\Omega_{D}(\lambda_D^{all}) =  \left\{ (\widetilde{R}_D,\widetilde{R}_{DU})\in \mathbb{R}^2_{+}|\lambda_{D} + \lambda_{DU} \leq \lambda_D^{all} \right\}.
\end{equation}
Due to the non-negativity of the throughput, the boundaries of $\Omega_{C}$ and $\Omega_{D}$ depict curves in the first quadrant on $\mathbb{R}^2$. It is also worthwhile to mention that either $\Omega_{C}$ or $\Omega_{D}$ has the complete information on the users densities in the system. In other words, for any point on the boundary of $\Omega_{C}$, we can find its counterpart which denotes the same user densities in the boundary of $\Omega_{D}$, and vice versa.

Furthermore, we can characterize the throughput region of the system to illustrate the spectrum access problem. Assume the system includes LTE/D2D users over the licensed band with the density of $\lambda_l = \lambda_C + \lambda_D$, and LTE-U/D2D-U users over the unlicensed band with the density of $\lambda_u = \lambda_{CU} + \lambda_{DU}$, respectively. Besides, we introduce $\kappa_l$ and $\kappa_u$ to denote the proportions of LTE and LTE-U users over licensed and unlicensed spectrum, which are given by $\kappa_l = {\lambda_C}/{\lambda_l}$ and $\kappa_u ={\lambda_{CU}}/{\lambda_u}$, respectively. Therefore, we have $\lambda_C = \kappa_l\lambda_l$, $\lambda_D = (1-\kappa_l)\lambda_l$, $\lambda_{CU} = \kappa_u\lambda_u$, and $\lambda_{DU} = (1-\kappa_u)\lambda_u$. Based on these notations, we can calculate the total throughput over licensed and unlicensed spectrum by $\widetilde{R}_l = S_{cell}(\lambda_{C} R_C + \lambda_{D} R_D)$ and $\widetilde{R}_u = S_{cell}(\lambda_{CU} R_{CU} + \lambda_{DU} R_{DU})$, respectively. Therefore, given the densities of all LTE and D2D users, i.e., $\lambda^{all}_C$ and $\lambda^{all}_D$, the throughput region of the system can be depicted as
\begin{equation}
\begin{split}
\Omega(\lambda) &= \left\{ (\widetilde{R}_l,\widetilde{R}_u)\in \mathbb{R}^2_{+}  |\kappa_l\lambda_l + \kappa_u\lambda_u \leq \lambda^{all}_C, \right. \\ & \qquad\qquad\left. (1-\kappa_l)\lambda_l + (1-\kappa_u)\lambda_u \leq \lambda^{all}_D \right\}.
\end{split}
\end{equation}
Similar to $\Omega_C$ and $\Omega_D$, the boundary of $\Omega$ depicts a curve in the first quadrant on $\mathbb{R}^2$. In addition, for any $(\widetilde{R}_l,\widetilde{R}_u)\in\Omega$, the total throughput of the system, denoted by $\widetilde{R}$, can be calculated by $\widetilde{R} = \widetilde{R}_l + \widetilde{R}_u$. Thus, we can investigate the spectrum access issue of the system by maximizing $\widetilde{R}$, which is given by the following problem:
\begin{equation}
\textbf{P3:}~~\mathop {\max }\limits_{(\widetilde{R}_l,\widetilde{R}_u) \in \Omega}~\widetilde{R},
\end{equation}
whose optimal solution is denoted by $(\widetilde{R}_l^*,\widetilde{R}_u^*)$. Since problem~(P3) is a standard linear programming problem, the optimal solution $(\widetilde{R}_l^*,\widetilde{R}_u^*)$ is obtained when the line $\widetilde{R} = \widetilde{R}_l + \widetilde{R}_u$ is tangent to the throughput region $\Omega$ on $\mathbb{R}^2$. As such, the user densities corresponding to the point $(\widetilde{R}_l^*,\widetilde{R}_u^*)$, denoted by~($\lambda_C^*, \lambda_D^*, \lambda_{CU}^*, \lambda_{DU}^*$), is the optimal solution in the system.

\section{Simulation Results}
\label{Simulation Result}

In this section, we present the simulation results of the geometry analysis, our proposed spectrum access algorithm and the throughput regions, respectively. The simulation parameters based on the existing LTE-Advanced specifications \cite{EUTRA} are given in Table \ref{parameters}.

\begin{table}[!t]
\centering
\caption{Parameters for Simulation} \label{parameters}
\vspace{-3mm}
\begin{tabular}{|p{2.4in}|p{0.5in}|}
  \hline \textbf{Parameters} & \textbf{Values}\\
  \hline
  \hline
  Radius of the cell $r_{cell}$ & 200 m\\
  Communication radius of D2D/D2D-U users $L_d$ & 20 m\\
  Communication radius of Wi-Fi users $L_w$ & 25 m\\
  Transmit power of LTE/LTE-U user $P_C$ & 17 dBm\\
  Transmit power of D2D/D2D-U user $P_D$ & 10 dBm\\
  Transmit power of Wi-Fi user $P_W$ & 23 dBm\\
  Licensed bandwidth $B_l$ & 40 MHz\\
  Unlicensed bandwidth $B_u$ & 500 MHz\\
  Decay factor of the path loss $\alpha$ & 4 \\
  Energy detection threshold for LTE-U user $P^{th}_{CU}$ & -62 dBm\\
  Energy detection threshold for D2D-U user $P^{th}_{DU}$ & -62 dBm\\
  Energy detection threshold for Wi-Fi user $P^{th}_{W}$ & -62 dBm\\
  Throughput threshold for cellular/cellular-U user $R^{th}_{C}$ & 100 Mbps \\
  Throughput threshold for D2D/D2D-U user $R^{th}_{D}$ & 100 Mbps \\
  Throughput threshold for Wi-Fi user $R^{th}_{W}$ & 54 Mbps \\
  Error tolerance threshold $\tau$ & 0.01 \\
\hline
\end{tabular}

\end{table}

\subsection{Verification of Geometry Analysis}
To verify our geometry analysis, we provide the simulated and analytical results on the throughput of different users, which are presented in Fig.~\ref{simulation_Rcd} to Fig.~\ref{simulation_Rw}.

\begin{figure}[!t]
\centering
\vspace{-5mm}
\includegraphics[width=3.2in]{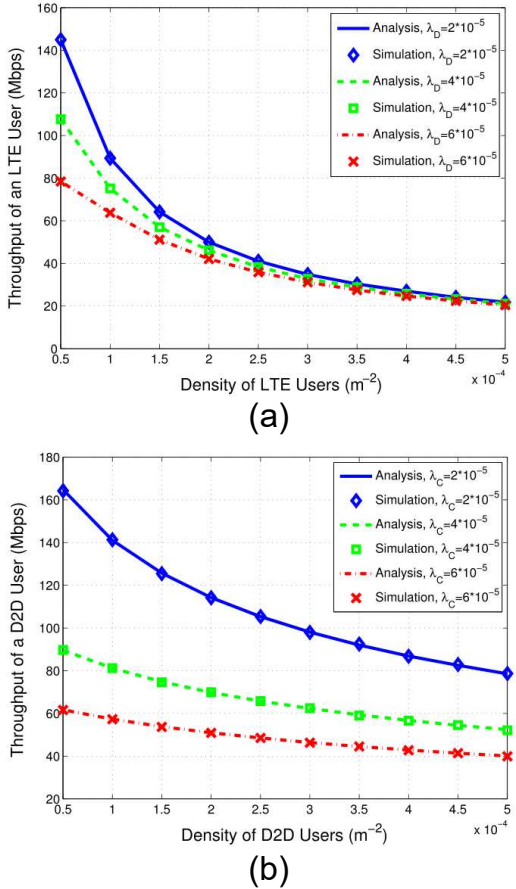}
\vspace{-7mm}
\caption{(a)~Throughput of an LTE user vs. density of LTE users with different densities of D2D users; (b)~Throughput of a D2D user vs. density of D2D users with different densities of LTE users.}
\vspace{-5mm}
\label{simulation_Rcd}
\end{figure}

For LTE and D2D users, their throughput is mainly related to the bandwidth of subchannels and the co-channel interference, as shown in equations~(\ref{Rc_exp}) and~(\ref{Rd_exp}). Fig.~\ref{simulation_Rcd}(a) shows the throughput of LTE users versus their density with different densities of D2D users. The throughput of an LTE user $R_C$ monotonically decreases with the density of LTE users $\lambda_C$. As mentioned in Section~\ref{System Model}, the number of licensed subchannels is proportional to $\lambda_C$, and thus, as $\lambda_C$ increases, the bandwidth occupied by each LTE user decreases, leading to the decrease of $R_C$. Besides, $R_{C}$ decreases as the density of D2D users $\lambda_D$ grows, since more D2D links will interfere an LTE user. In Fig.~\ref{simulation_Rcd}(b), we present the throughput of D2D users versus their density. Likewise, the throughput of a D2D user $R_{D}$ monotonically decreases with an increase of $\lambda_D$ as well as $\lambda_C$. The reason lies in that the increase of $\lambda_D$ brings severer interference from other D2D links, and the increase of $\lambda_C$ reduces the bandwidth occupied by each D2D link. By comparing these two figures, we can infer that the $R_C$ is comparable to $R_D$ in the system with a low user densities~(e.g., both $\lambda_C$ and $\lambda_D$ are below $6 \times 10^{-5} m^{-2}$).

\begin{figure}[!t]
\centering
\vspace{-4mm}
\includegraphics[width=3.1in]{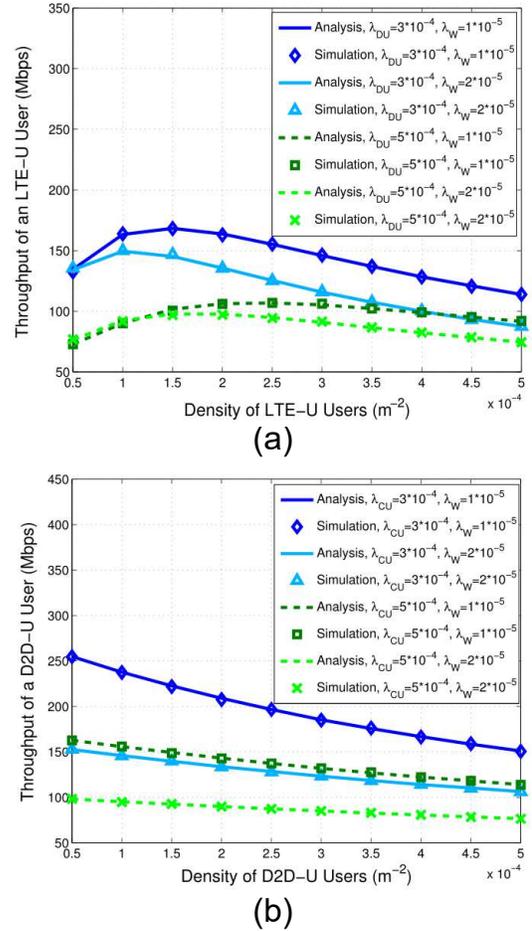}
\vspace{-6mm}
\caption{(a)~Throughput of an LTE-U user vs. density of LTE-U users with different densities of D2D-U users and Wi-Fi APs; (b)~Throughput of a D2D-U user vs. density of D2D-U users with different densities of LTE-U users and Wi-Fi APs.}
\vspace{-3mm}
\label{simulation_Rcudu}
\end{figure}

The throughput of LTE-U, D2D-U and Wi-Fi users is mainly affected by their MAPs, i.e., $p^{act}_{CU}$, $p^{act}_{DU}$ and $p^{act}_{W}$. In Fig.~\ref{simulation_Rcudu}(a), we plot the throughput of LTE-U users versus their density with different densities of D2D-U users and Wi-Fi APs. The throughput of an LTE-U user $R_{CU}$ first increases and then decreases with the density of LTE-U users $\lambda_{CU}$. According to equation~(\ref{MAP_CU}), $\lambda_{CU}$ increases with $p^{act}_{CU}$, leading to the increase of $R_{CU}$. However, based on equation~(\ref{MAP_DU}), as $\lambda_{CU}$ grows, $p^{act}_{DU}$ increases, which decreases $R_{CU}$. Therefore, there exists a trade-off between $p^{act}_{CU}$ and $p^{act}_{DU}$. When $\lambda_{CU}$ is low, the throughput is dominated by the increase of $p^{act}_{CU}$, leading to the growth of $R_{CU}$, but when $\lambda_{CU}$ reaches to a high level, the throughput is dominated by the increase of $p^{act}_{DU}$, which decreases $R_{CU}$. In addition, $R_{CU}$ decreases with the density of D2D-U users $\lambda_{DU}$ as well as the density of Wi-Fi APs $\lambda_W$, since the increase of $\lambda_{DU}$ and $\lambda_W$ results in the decrease of $p^{act}_{CU}$, which decreases $R_{CU}$.
Fig.~\ref{simulation_Rcudu}(b) presents the throughput of D2D-U users versus their density. Similar to Fig.~\ref{simulation_Rcd}(b), the throughput of a D2D-U user $R_{DU}$ monotonically decreases with $\lambda_{DU}$, since a higher $\lambda_{DU}$ implies a lower $p^{act}_{DU}$, which leads to the decrease of $R_{DU}$. Besides, $R_{DU}$ decreases with the increase of $\lambda_{CU}$ and $\lambda_{W}$, due to the decrease of $p^{act}_{DU}$.
Combing these two figures, we can find out that $R_{DU}$ is higher than $R_{CU}$ in the low traffic system~(where both $\lambda_{CU}$ and $\lambda_{DU}$ are lower than $3 \times 10 ^{-4} m^{-2}$). Moreover, since the bandwidth of the unlicensed spectrum~(500MHz) is much wider than that of the licensed spectrum~(40MHz), LTE-U/D2D-U users are more likely to achieve a higher throughput than LTE/D2D over the licensed spectrum.

\begin{figure}[!t]
\centering
\includegraphics[width=3in]{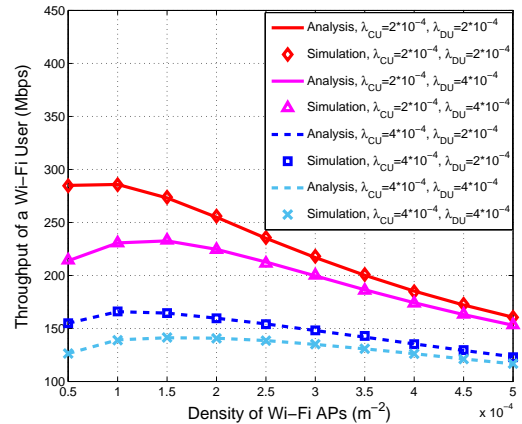}
\vspace{-3mm}
\caption{Throughput of a Wi-Fi user vs. density of Wi-Fi APs with different densities of LTE-U and D2D-U users.}
\vspace{-5mm}
\label{simulation_Rw}
\end{figure}

Fig.~\ref{simulation_Rw} presents the throughput of Wi-Fi users versus their density with different densities of LTE-U and D2D-U users. With the increase of $\lambda_W$, the throughput of a Wi-Fi user $R_W$ first increases and then decreases. According to equations~(\ref{MAP_CU}) to~(\ref{MAP_W}), when $\lambda_W$ increases, both $p^{act}_{CU}$ and $p^{act}_{DU}$ decrease, and thus, $R_W$ increases. However, as $\lambda_W$ grows, $p^{act}_{W}$ decreases, leading to the decrease of $R_W$. Similar to LTE-U users, there exist the trade-off between the aforementioned two effects for $R_W$. When $\lambda_W$ is low, $R_W$ is dominated by the decrease of $p^{act}_{CU}$ and $p^{act}_{DU}$, i.e., $R_W$ increases with $\lambda_W$. When $\lambda_W$ reaches a high level, $R_W$ is mainly affected by the decrease of $p^{act}_{W}$, and thus, $R_W$ decreases with $\lambda_W$. In addition, $R_W$ decreases with $\lambda_{CU}$ and $\lambda_{DU}$ due to the decrease of $p^{act}_{W}$.

From Fig.~\ref{simulation_Rcd} to Fig.~\ref{simulation_Rw}, we can easily find out that the analytical results are consistent with the simulated ones, which validates our analysis.

\subsection{Performance of Spectrum Access Algorithm}
The performance of our proposed spectrum access algorithm is presented in Fig.~\ref{spectum_access_performance} to Fig.~\ref{spectrum_access_WiFi}.

\begin{figure}[!t]
\centering
\vspace{-2mm}
\includegraphics[width=3.1in]{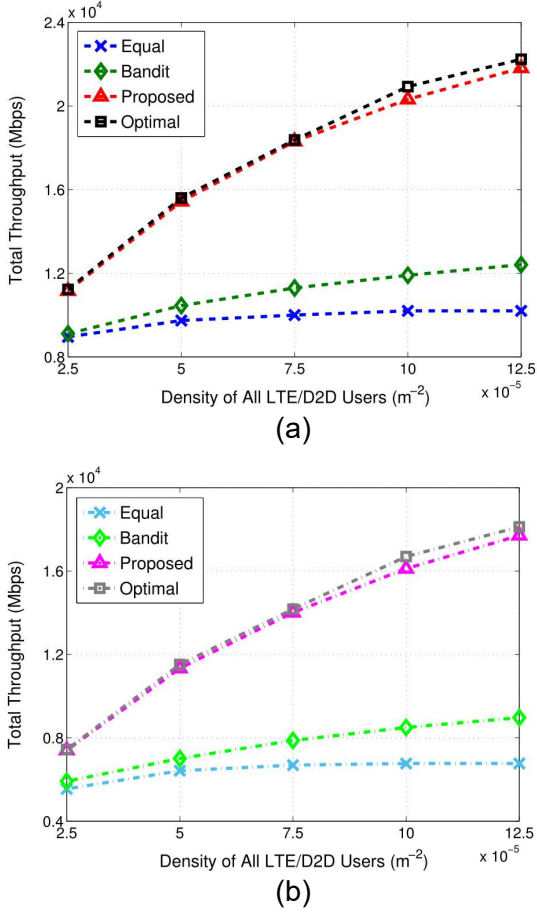}
\vspace{-7mm}
\caption{Performance comparison with different densities of Wi-Fi APs $\lambda_W$;~(a)~$\lambda_W = 1\times10^{-5}m^{-2}$;~(b)~$\lambda_W = 2\times10^{-5}m^{-2}$.}
\vspace{-5mm}
\label{spectum_access_performance}
\end{figure}

In Fig.~\ref{spectum_access_performance}, we compare our proposed algorithm with other two algorithms:
\begin{itemize}
  \item \textbf{Equal Proportion}: the densities of different users are identical.
  \item \textbf{Bandit Algorithm}: each user sequentially access the spectrum based on their utilities, as referred in~\cite{SD-2017}.
\end{itemize}
We also present the simulated optimal values obtained by traversal in this figure. For the sake of fairness, we assume the densities of all LTE and D2D users are identical, i.e., $\lambda_C^{all} = \lambda_D^{all}$. As shown in Fig.~\ref{spectum_access_performance}, the total throughput monotonically increases with the densities of all LTE/D2D users. By comparing the performance of these three algorithms, we find out that our proposed spectrum access algorithm outperforms the others. Besides, the performance of our proposed algorithm is quite close to the optimal value, which indicates the effectiveness of our proposed algorithm. For all algorithms, the increase of $\lambda_W$ will lead to the decline of the total throughput, since more interference is brought by the Wi-Fi users over the unlicensed spectrum.

\begin{figure*}[!t]
\centering
\includegraphics[width=6.4in]{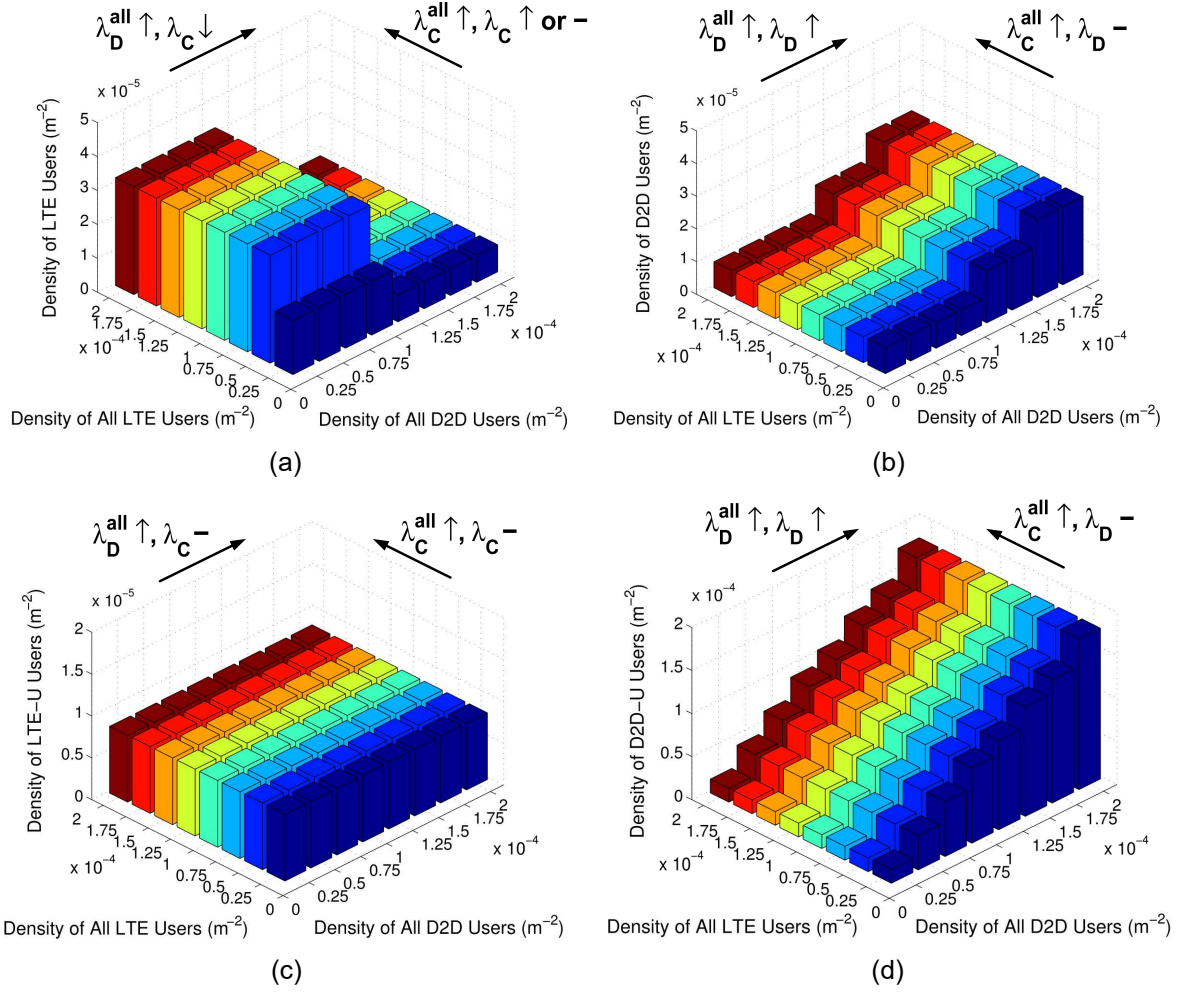}
\vspace{-5mm}
\caption{Spectrum access with different densities of all LTE/D2D users, given the density of Wi-Fi APs $\lambda_W = 1 \times 10^{-4} m^{-2}$; (a)~LTE users;~(b)~D2D users;~(c)~LTE-U users;~(d)~D2D-U users. The notations $\uparrow$, $\downarrow$, and $-$ indicate increasing, decreasing, and remaining unchanged, respectively.}
\vspace{-5mm}
\label{surf}
\end{figure*}

Fig.~\ref{surf} shows the spectrum access with different densities of all LTE and D2D users, i.e., $\lambda_C^{all}$ and $\lambda_D^{all}$, given the fixed density of Wi-Fi APs $\lambda_W = 1\times10^{-4} m^{-2}$. We can observe that the total user density over the unlicensed spectrum is higher than that of the licensed one. This is because that the bandwidth of the unlicensed spectrum is much wider than that of the licensed spectrum, and thus, utilizing the unlicensed spectrum can achieve a higher throughput. Besides, when $\lambda_D^{all}$ increases, the densities of LTE/LTE-U users decrease or remain unchanged, since the interference to the LTE/LTE-U users becomes severer as more D2D/D2D-U users access the spectrum. However, the spectrum access of D2D/D2D-U users is hardly affected by the increase of $\lambda_C^{all}$. This is mainly because that the transmitter and the receiver of a D2D pair are close to each other, which can guarantee a satisfactory SINR even the densities of LTE/LTE-U are large.

\begin{figure}[!t]
\centering
\includegraphics[width=3in]{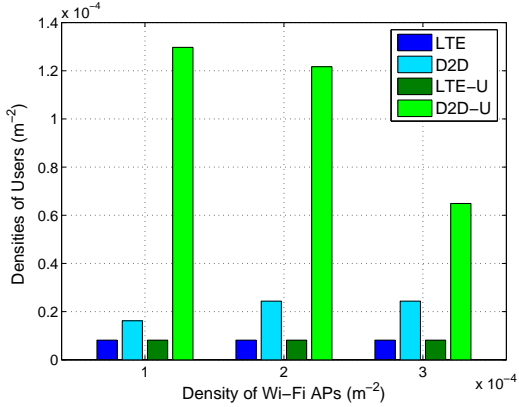}
\vspace{-3mm}
\caption{Spectrum access with different densities of Wi-Fi APs $\lambda_W$, given the densities of all LTE/D2D users $\lambda_C^{all} = \lambda_D^{all} = 1.5 \times 10^{-4} m^{-2}$.}
\vspace{-5mm}
\label{spectrum_access_WiFi}
\end{figure}

In Fig.~\ref{spectrum_access_WiFi}, we present the result of spectrum access with different densities of Wi-Fi APs given the densities of users on the LTE/D2D modes $\lambda_C^{all} = \lambda_D^{all} = 1.5 \times 10^{-4} m^{-2}$. As $\lambda_W$ increases, the density of D2D-U users decreases while the density of D2D users increases. This is because that more Wi-Fi APs can bring severer interference over the unlicensed spectrum, and thus, more D2D pairs prefer the licensed spectrum than the unlicensed one. Besides, when $\lambda_W$ grows, the densities of LTE/LTE-U users remain unchanged, which is due to the QoS requirements of LTE/D2D users over the licensed spectrum. In all cases, we can observed that the densities of D2D/D2D-U users are larger than that of LTE/LTE-U users, since the proximity of the transmitter and the receiver of a D2D pair can ensure a higher SINR.

\subsection{Throughput Regions}
The throughput regions of the system are plotted in Fig.~\ref{region_LTE_D2D} and Fig.~\ref{region_optimal}.

\begin{figure}[!t]
\centering
\includegraphics[width=3.2in]{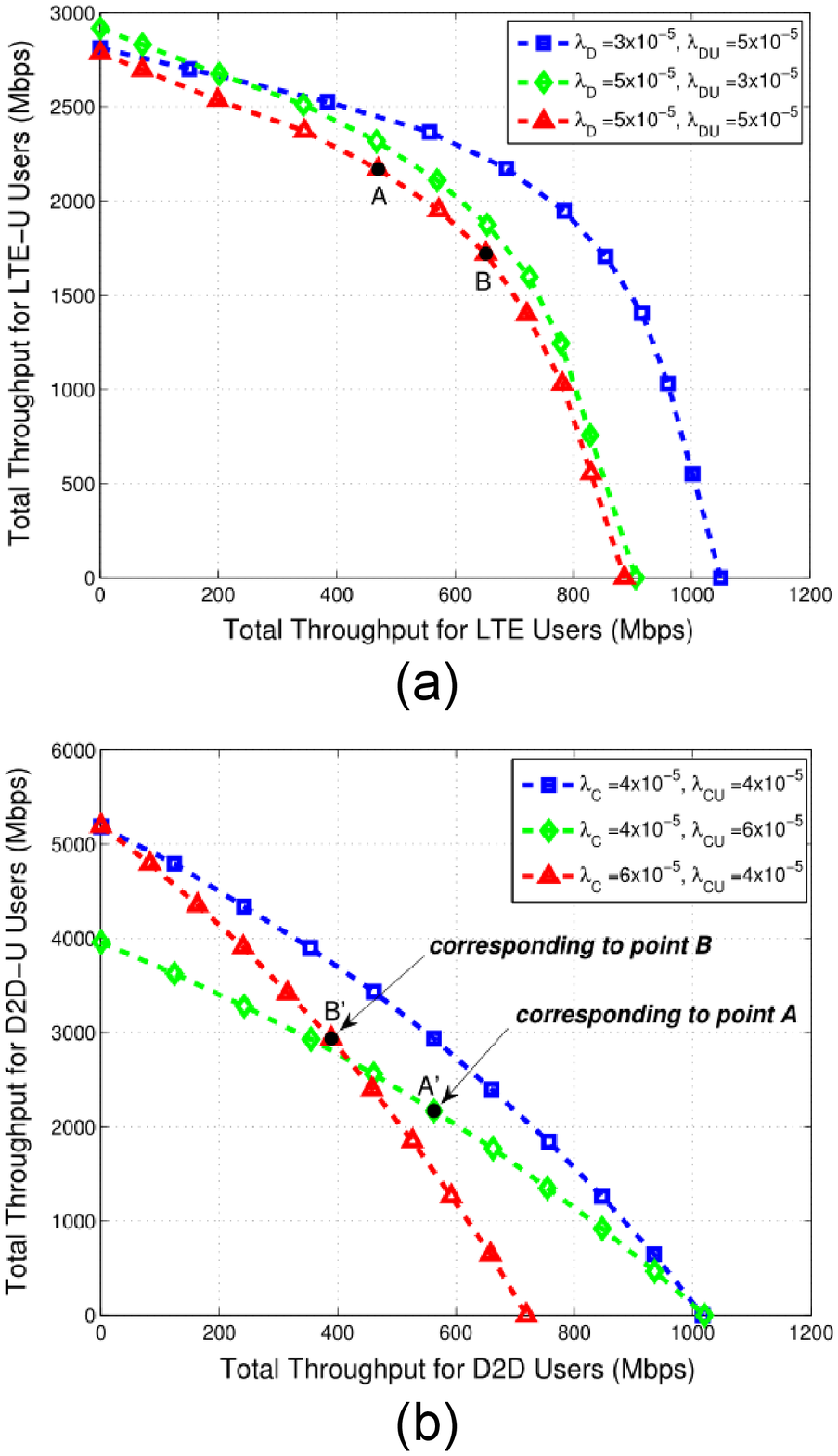}
\vspace{-7mm}
\caption{(a)~Throughput regions of LTE and LTE-U users over both licensed and unlicensed spectrum, given the density of all LTE users $\lambda_C^{all} = 1 \times 10^{-4} m^{-2}$ and the density of Wi-Fi users $\lambda_W = 3 \times 10^{-5} m^{-2}$; (b)~Throughput regions of D2D and D2D-U users over both licensed and unlicensed spectrum, given the density of all D2D users $\lambda_D^{all} = 1 \times 10^{-4} m^{-2}$ and the density of Wi-Fi APs $\lambda_W = 3 \times 10^{-5} m^{-2}$.}
\label{region_LTE_D2D}
\vspace{-5mm}
\end{figure}

In Fig.~\ref{region_LTE_D2D}(a), we depict the throughput regions of LTE/LTE-U users over both licensed and unlicensed spectrum with different densities of D2D/D2D-U users, given $\lambda_C^{all} = 1 \times 10^{-4} m^{-2}$ and $\lambda_W = 3 \times 10^{-5} m^{-2}$. It can be seen that as the increase of $\lambda_D$ and $\lambda_{DU}$, the throughput regions shrink, due to the increase of interference links over the licensed and unlicensed spectrum. Besides, in Fig.~\ref{region_LTE_D2D}(b), we present the throughput regions of D2D/D2D-U users over both licensed and unlicensed spectrum with different densities of LTE/LTE-U users, given $\lambda_D^{all} = 1 \times 10^{-4} m^{-2}$ and $\lambda_W = 3 \times 10^{-5} m^{-2}$. Similarly, the increase of $\lambda_C$ and $\lambda_{CU}$ shrinks the throughput regions, since the interference links over licensed and unlicensed spectrum increase. As discussed in the Section~\ref{Performance Analysis}, either of these two figures has the complete information on the densities of different users in the system, i.e., for any point in either figure, we can find its counterpart with the same user densities in the other figure. For instance, point $A$~(where $\lambda_{C} = 4 \times 10^{-5} m^{-2}$, $\lambda_{CU} = 6 \times 10^{-5} m^{-2}$) on the curve with $\lambda_{D} = 5 \times 10^{-5} m^{-2}$, $\lambda_{DU} = 5 \times 10^{-5} m^{-2}$ in Fig.~\ref{region_LTE_D2D}(a) is corresponding to point $A'$~(where $\lambda_{D} = 5 \times 10^{-5} m^{-2}$, $\lambda_{DU} = 5 \times 10^{-5} m^{-2}$) on the curve with $\lambda_{C} = 4 \times 10^{-5} m^{-2}$, $\lambda_{CU} = 6 \times 10^{-5} m^{-2}$ in Fig.~\ref{region_LTE_D2D}(b). As point $A$ moves to point $B$~(where $\lambda_{C} = 6 \times 10^{-5} m^{-2}$, $\lambda_{CU} = 4 \times 10^{-5} m^{-2}$) along with the same curve in Fig.~\ref{region_LTE_D2D}(a), the counterpart of point $B$, namely $B'$~(where $\lambda_{D} = 5 \times 10^{-5} m^{-2}$, $\lambda_{DU} = 5 \times 10^{-5} m^{-2}$), moves to the curve with $\lambda_{C} = 6 \times 10^{-5} m^{-2}$, $\lambda_{CU} = 4 \times 10^{-5} m^{-2}$ in Fig.~\ref{region_LTE_D2D}(b).

Fig.~\ref{region_optimal}(a) shows the throughput regions of the system over both licensed and unlicensed spectrum with different proportions of LTE/LTE-U users $\kappa_l$ and $\kappa_u$, given $\lambda^{all}_C = \lambda^{all}_D = 5\times 10^{-4} m^{-2}$ and $\lambda_W = 8 \times 10^{-5} m^{-2}$. With the increase of $\kappa_l$, the total throughput over the licensed spectrum increases, i.e., throughput regions extend. Due to the orthogonal utilization of the LTE mode, the number of subchannels increases with the density of LTE users, and thus, the co-channel interference of D2D users is alleviated effectively, thereby increasing the total throughput over the licensed spectrum. Besides, the total throughput over the unlicensed spectrum decreases with $\kappa_u$, since the bandwidth of subchannels decreases with the density of LTE-U users. Moreover, we can easily obtain the optimal densities in the system under different $\kappa_l$ and $\kappa_u$ by depicting the tangent points of lines $\widetilde{R} = \widetilde{R}_l + \widetilde{R}_u$ and throughput regions, which denoted by points $A$, $B$, $C$ and $D$ in Fig.~\ref{region_optimal}(a). For instance, when $\kappa_l = 0.8$ and $\kappa_u = 0.7$, the optimal densities in the system are achieved at point $A$, whose densities are given by $(\lambda_C^*, \lambda_D^*, \lambda_{CU}^*, \lambda_{DU}^*) = (5.71\times10^{-5}m^{-2}, 1.43\times10^{-5}m^{-2}, 4.50\times 10^{-4}m^{-2}, 1.93\times 10^{-4}m^{-2})$.

\begin{figure}[!t]
\centering
\vspace{-1mm}
\includegraphics[width=3.2in]{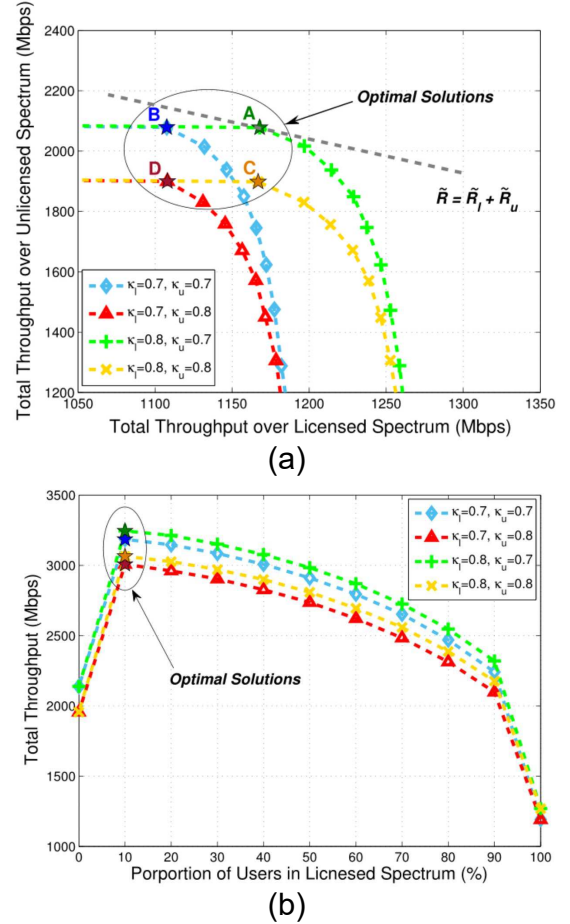}
\vspace{-5mm}
\caption{(a)~Throughput regions of the system with different $\kappa_l$ and $\kappa_u$, given the densities of all LTE and D2D users $\lambda^{all}_C = \lambda^{all}_D = 5\times 10^{-4} m^{-2}$, and the density of Wi-Fi APs $\lambda_W = 8 \times 10^{-5} m^{-2}$; (b)~Total throughput of the system with different $\kappa_l$ and $\kappa_u$, given the densities of all LTE and D2D users $\lambda^{all}_C = \lambda^{all}_D = 5\times 10^{-4} m^{-2}$, and the density of Wi-Fi APs $\lambda_W = 8 \times 10^{-5} m^{-2}$.}
\vspace{-5mm}
\label{region_optimal}
\end{figure}

In Fig.~\ref{region_optimal}(b), we present the total throughput of the system with respect to the proportions of users over the licensed spectrum, defined by $\rho = \frac{\lambda_l}{\lambda_l+\lambda_u}$, with different $\kappa_l$ and $\kappa_u$, given $\lambda^{all}_C = \lambda^{all}_D = 5\times 10^{-4} m^{-2}$ and $\lambda_W = 8 \times 10^{-5} m^{-2}$. The total throughput of the system first increases and then decreases with $\rho$. In addition, when $\rho$ is low~(e.g., less than $10\%$), more LTE and D2D users will access the licensed spectrum. Therefore, the interference over the unlicensed spectrum is effectively alleviated, and the total throughput of the system increases. When $\rho$ reaches to a specific level, the total throughput of the system starts to decrease~(e.g., the transition approximately occurs at $10\%$ when $\kappa_l = 0.8$ and $\kappa_u = 0.7$), since the throughput of LTE/D2D users over the licensed spectrum is less than that of LTE-U/D2D-U users over the unlicensed spectrum. Moreover, we depict the maxima of the total throughput under different $\kappa_l$ and $\kappa_u$ as points $A$, $B$, $C$ and $D$, respectively, which are corresponding to their counterparts in Fig.~\ref{region_optimal}(a). We can observe that the optimal proportions of users over the unlicensed spectrum~($90\%$) are much larger than those of the licensed spectrum~($10\%$) for different $\kappa_l$ and $\kappa_u$. In other words, users are more likely to utilize the unlicensed spectrum rather than the licensed one.

\subsection{Remarks on Spectrum Access}

Based on the above simulation results, we present the following remarks on the spectrum access of LTE/D2D users in the system.

\begin{remark}
Given the fixed density of Wi-Fi APs, more users are likely to access the unlicensed spectrum rather than the licensed one in the low-traffic scenarios~(i.e., where the densities of all LTE/D2D users are low), since the bandwidth of unlicensed spectrum is much richer than that of the licensed spectrum. As the network traffic increases~(i.e., the densities of all LTE/D2D users increase), the unlicensed spectrum gets crowded, and thus, new users tend to utilize the licensed spectrum to avoid the severe interference in the unlicensed spectrum. However, when the network traffic is very high, new users may access neither the licensed nor the unlicensed spectrum. The spectrum access under different network traffics is summarized in Table.~\ref{Access Table1}.
\end{remark}

\begin{table}[!t]
\centering
\caption{Spectrum Access under Different Network Traffics}
\vspace{-3mm}
\begin{tabular}{|p{1.01in}<{\centering}|p{0.89in}<{\centering}|p{0.89in}<{\centering}|}
  \hline
  \diagbox{\tabincell{c}{Network\\Traffic}}{Spectrum} & Licensed & Unlicensed \\
  \hline
  Low & \tabincell{c}{less likely to\\access} & \tabincell{c}{more likely to\\access} \\
  \hline
  High & \tabincell{c}{more likely to\\access for new users} & fully utilized \\
  \hline
\end{tabular}
\label{Access Table1}
\end{table}

\begin{remark}
Given the fixed densities of all LTE/D2D users, when the system contains few Wi-Fi APs, more users are willing to access the unlicensed spectrum rather than licensed one for higher throughput, since the interference from the Wi-Fi system is trivial. However, in a network with high density of Wi-Fi APs, users prefer the licensed spectrum to avoid the interference from the Wi-Fi system. The spectrum access under different densities of Wi-Fi APs is summarized in Table.~\ref{Access Table2}.
\end{remark}

\begin{table}[!t]
\centering
\caption{Spectrum Access under Different Wi-Fi Densities}
\vspace{-3mm}
\begin{tabular}{|p{1.01in}<{\centering}|p{0.89in}<{\centering}|p{0.89in}<{\centering}|}
  \hline
  \diagbox{\tabincell{c}{Wi-Fi\\Density}}{Spectrum} & Licensed & Unlicensed \\
  \hline
  Low & \tabincell{c}{less likely to\\access} & \tabincell{c}{more likely to\\access} \\
  \hline
  High & \tabincell{c}{more likely to\\access} & \tabincell{c}{less likely to\\access} \\
  \hline
\end{tabular}
\label{Access Table2}
\end{table}

\begin{remark}
Given the fixed densities of all LTE/D2D users and Wi-Fi APs, D2D users are more willing to access the unlicensed spectrum than LTE users. This is because that the transmitter and the receiver of each D2D pair are usually close to each other, which can ensure satisfactory SINR even in some high-traffic scenarios with dense interference links. The unlicensed spectrum access of LTE and D2D users is summarized in Table.~\ref{Access Table3}.
\end{remark}

\begin{table}[!t]
\centering
\caption{Unlicensed Spectrum Access of LTE and D2D Users}
\vspace{-3mm}
\begin{tabular}{|p{1.01in}<{\centering}|p{0.89in}<{\centering}|p{0.89in}<{\centering}|}
  \hline
  User Type & LTE & D2D \\
  \hline
  \tabincell{c}{Unlicensed Spectrum\\Access} & \tabincell{c}{less willing to\\access} & \tabincell{c}{more willing to\\access} \\
  \hline
\end{tabular}
\label{Access Table3}
\end{table}

\section{Conclusion}
\label{Conclusion}

In this paper, we have presented the geometric analysis for a unified network where D2D users work as an underlay to LTE users in both licensed and unlicensed spectrum. We have analytically derived the throughput for each kind of users by leveraging stochastic geometry based on proposed spectrum sharing schemes. To investigate the spectrum access problem in the system, we have maximized the total throughput of the system by optimizing the densities of different kinds of users, and then proposed a SQP-based spectrum access algorithm to obtain the suboptimal solutions. Moreover, we have characterized throughput regions to illustrate the performance of the system.

Three conclusions can be drawn from the simulation results. First, users tend to access the unlicensed spectrum in a low-traffic network, while new users prefer the licensed spectrum as the network traffic increases. Second, given the fixed network traffic, as the number of Wi-Fi APs increases, more users are prone to access the licensed spectrum. Third, given the fixed network traffic, D2D users are more willing to access the unlicensed spectrum than LTE users.

\begin{appendices}
\section{Proof of Proposition 1}\label{Proof1}
According to equation (\ref{equ_Rc}), we have:
\begin{align}
R_{C,u} &= B_l^{sub}\mathbb{E}\left\{ {\log_2 \left( {1 + \gamma _{C,u} } \right)} \right\} \nonumber\\
&= B_l^{sub}\mathbb{E}_{I_D,d_{u,b}} \left\{ \int_0^\infty  \mathbb{P} [ {\log_2 ( {1 + \frac{{{P_C}d_{u,b}^{ - \alpha }h}}{{{\sigma ^2} + {I_D}}}} ) > t} ]dt \right\} \nonumber\\
&= B_l^{sub}\mathbb{E}_{I_D,d_{u,b}} \left\{ \int_0^\infty  {\mathbb{P}[ {h > \frac{{{2^t} - 1}}{{{P_C}d_{u,b}^{ - \alpha }}}( {{\sigma ^2} + {I_D}} )} ]} dt\right\} \nonumber\\
&\mathop  = \limits^{(a)} B_l^{sub}\mathbb{E}_{I_D,d_{u,b}} \left\{ {\int_0^\infty  {e^{ { - \frac{{{2^t} - 1}}{{{P_C}d_{u,b}^{ - \alpha }}}\left( {{\sigma ^2} + {I_D}} \right)} }} dt} \right\} \nonumber\\
&\mathop  = \limits^{(b)} B_l^{sub} \mathbb{E}_{d_{u,b}} \left\{ \int_0^\infty \frac{e^{ -s{\sigma ^2}} \mathbb{E}_{I_D} \left\{ {{e^{ - s{I_D}}}} \right\}}{{\left( {s + {P_C^{ - 1}}{d_{u,b}^\alpha }} \right)\ln 2}} ds \right\}\nonumber\\
&\mathop  = \limits^{(c)} B_l^{sub} \int_0^{{r_{cell}}} {\int_0^\infty  \frac{e^{ -s{\sigma ^2}}  \mathbb{E}\left\{ {e^{-sI_D}}\right\}}{{\left( {s + {P_C^{ - 1}}{r^\alpha }} \right)\ln 2}}  \frac{{2r}}{{r_{cell}^2}}dsdr},
\end{align}
where (a) is given by $h \sim \exp(1)$, (b) follows the substitution $s = \frac{{{2^t} - 1}}{{{P_C}d_{u,b}^{ - \alpha }}}$ and (c) is based on the PDF of $d_{u,b}$. The term $\mathbb{E} \left\{ {e^{ - s I_D}}\right\}$ in (c) can be calculated by the probability generating functional (PGFL) of the PPP as
\begin{align}
\mathbb{E}\left\{ {e^{-sI_D}}\right\} &= \mathbb{E}_{h,d_{v,b}}\left\{ {e^ { - s \sum_{{v} \in {\psi '_D}} {{P_D}d_{v,b}^{ - \alpha }h} } } \right\} \nonumber \\
&= \mathbb{E}_{h,d_{v,b}} \left\{ {\prod\limits_{v \in {\psi '_D}} {e^ { { - s{P_D}d_{{v},b}^{ - \alpha }h} }}} \right\} \nonumber \\
&= \mathbb{E}_{d_{v,b}} \left\{ {\prod\limits_{v \in {\psi '_D}} {\frac{1}{{1{ + }s{P_D}d_{{v},b}^{ - \alpha }}}} } \right\} \nonumber \\
&= e^{ { - \frac{\lambda_D}{K_l}\int_0^{2\pi } {\int_0^{{r_{cell}}} {\frac{1}{{1 + {{\left( {s{P_D}} \right)}^{ - 1}}{x^{\alpha} }}}x} } dxd\theta } },
\end{align}
Then we can get the throughput of an LTE user and the proof ends.

\section{Proof of Proposition 2}\label{Proof2}
According to equation (\ref{equ_Rd}), we have:
\begin{align}
R_{D,v} &= B^{sub}_l\mathbb{E}\left\{ {{{\log }_2}\left( {1 + \gamma_{D,v}} \right)} \right\}\nonumber\\
 &= B^{sub}_l  \int_0^{{L_d}} {\int_0^\infty \frac{e^{ -s{\sigma ^2}}  \mathbb{E}\left\{ {{e^{ - s{I_C}}}} \right\} \mathbb{E}\left\{ {{e^{ - s{I_D}}}} \right\} }{{\left( {s + {P_D^{ - 1}}{r^\alpha }} \right)\ln 2}}   \frac{{2r}}{L_d^2}dsdr}.
\end{align}
The term $\mathbb{E} \left\{ {e^{ - s I_C}}\right\}$ can be given by
\begin{align}
\mathbb{E}\left\{ {{e^{ - s{I_C}}}} \right\} &= \mathbb{E}_{h,d_{u,v}}\left\{ {{e^{ - s{P_C}d_{u,v}^{ - \alpha }h}}} \right\} \nonumber\\
&= \mathbb{E}_{d_{u,v}}\left\{ {\frac{1}{{1+s{P_C}d_{u,v}^{ - \alpha }}}} \right\}.
\end{align}
When $r_{cell} \gg L_d$, the distance from LTE user $u$ to the RX of D2D user $v$ can be approximately replaced by the distance from LTE user $u$ to the TX of D2D user $v$. Denote the distance from LTE user $u$ and the TX of D2D user $v$ to the BS (the center of the cell) as $d_{u,b}$ and $d_{v,b}$, respectively. Due to the character of PPP, the PDF of $d_{u,b}$ is given by ${f_{{d_{u,b}}}}(r) = {2r}/{r_{cell}^2},~0 \le r \le {r_{cell}}$. Then, we can calculate $\mathbb{E} \left\{ {e^{ - s I_C}}\right\}$ as
\begin{equation}
\begin{split}
\mathbb {E} \left\{ {{e^{ - s{I_C}}}} \right\} = \int_0^{2\pi } \int_0^{{r_{cell}}} \frac{1}{{1 + s{P_C}\mathcal {H}{{(x,\theta ,d_{v,b})}^{ - \alpha /2}}}} \cdot \\ \quad\frac{{2x}}{{r_{cell}^2}} \frac{1}{{2\pi }}dx  d\theta.
\end{split}
\end{equation}
Besides, the term $\mathbb{E} \left\{ {e^{ - s I_D}}\right\}$ can be given by
\begin{align}
\mathbb{E}\left\{ {{e^{ - s{I_D}}}} \right\} &= \mathbb{E}_{h,d_{v',v}}\left\{ {e^{ { - s\sum_{{v'} \in {\psi '_D}\backslash v} {{P_D}d_{{v'},v}^{ - \alpha }h} } }} \right\} \nonumber\\
&\mathop \approx \limits^{(a)}  e^ { - \lambda'_D \int_0^{2\pi } {\int_0^{{r_{cell}}} {\frac{1}{{1 + {{\left( {s{P_D}} \right)}^{ - 1}}{{\cal H}^{\alpha /2}}\left( {x,\theta ,d_{v,b}} \right)}}x} } dxd\theta },
\end{align}
where in~(a) $\lambda'_D = \frac{1}{K_l}\left(\lambda_D -\frac{1}{\pi r_{cell}^2}\right)$, and the PDF of $d_{v',v}$ can be approximately replaced by the PDF of distance between the TXs of D2D user $v'$ and $v$, given $r_{cell} \gg L_d$.
According to the character of PPP, the PDF of $d_{v,b}$ is given by ${f_{{d_{v,b}}}}(r) = {2r}/{r_{cell}^2},~0 \le r \le {r_{cell}}$. Then we can get the throughput of a D2D user and ends the proof.

\end{appendices}

\bibliographystyle{IEEEtran}

\begin{IEEEbiography}
[{\includegraphics[width=1in,height=1.25in,clip,keepaspectratio]{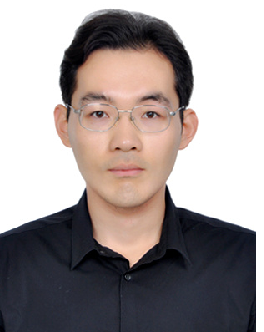}}]{Fanyi Wu} (S'18) received his B.~S.~degree in Electronic Engineering from Beihang University, Beijing, China, in 2016. He is currently pursuing his Ph.D. degree at School of Electrical Engineering and Computer Science in Peking University. His current research interest includes device-to-device communications, cellular Internet of unmanned aerial vehicles, and deep reinforcement learning.
\end{IEEEbiography}

\begin{IEEEbiography}
[{\includegraphics[width=1in,height=1.25in,clip,keepaspectratio]{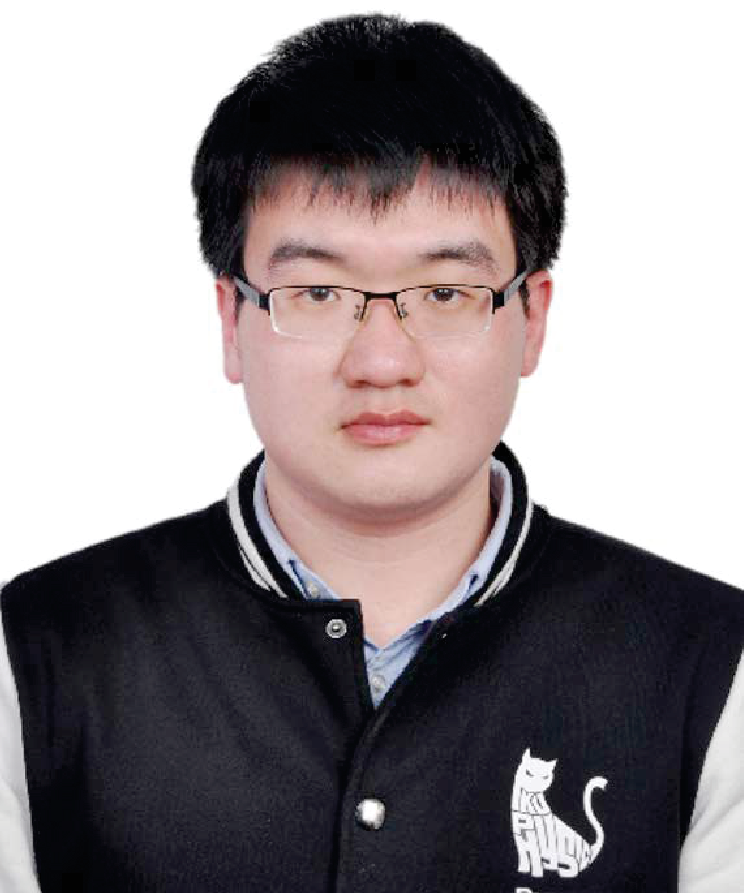}}]{Hongliang Zhang} (S'15) received the B.~S.~degree in Electronic Engineering from Peking University, Beijing, China, in 2014. He is currently pursuing his PhD's degree at School of Electrical Engineering and Computer Science in Peking University. His current research interest includes device-to-device communications, unmanned aerial vehicle networks, and hypergraph theory. He has also served as a TPC Member for GlobeCom 2016, ICC 2016, ICCC 2017, ICC 2018, and GlobeCom 2018.
\end{IEEEbiography}

\begin{IEEEbiography}[{\includegraphics[width=1in,height=1.25in,clip,keepaspectratio]{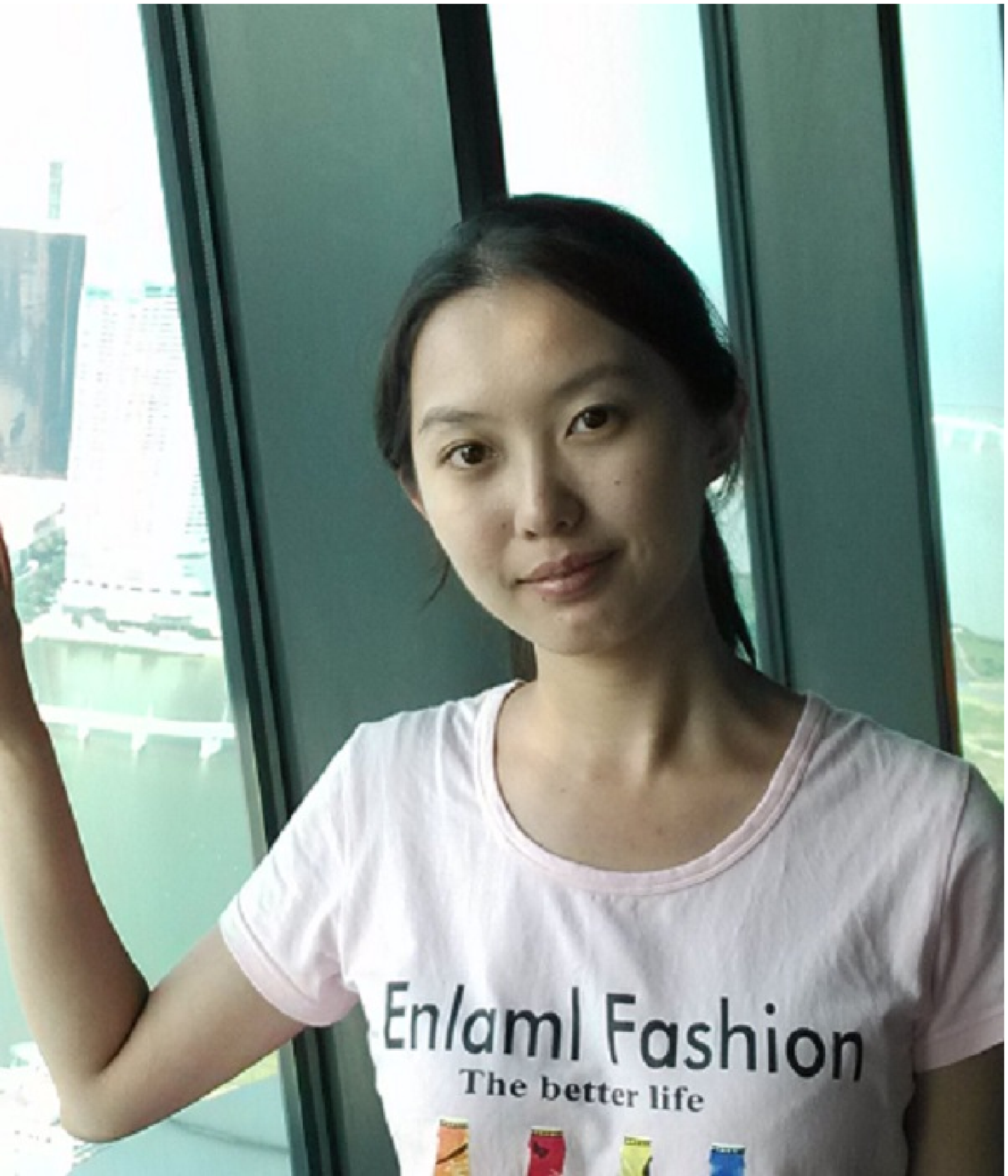}}]{Boya Di}(S'17) received her PhD from Peking University, China, in July 2019. Prior to that, she received the B.S. degree in electronic engineering from Peking University in 2014. Her current research interests include edge computing, random access techniques for 5G, vehicular and UAV networks, and IoT applications. She has also served as a TPC member in GlobeCom 2016, ICCC 2017, ICC 2016 and ICC 2018.
\end{IEEEbiography}

\begin{IEEEbiography}[{\includegraphics[width=1in,height=1.25in,clip,keepaspectratio]{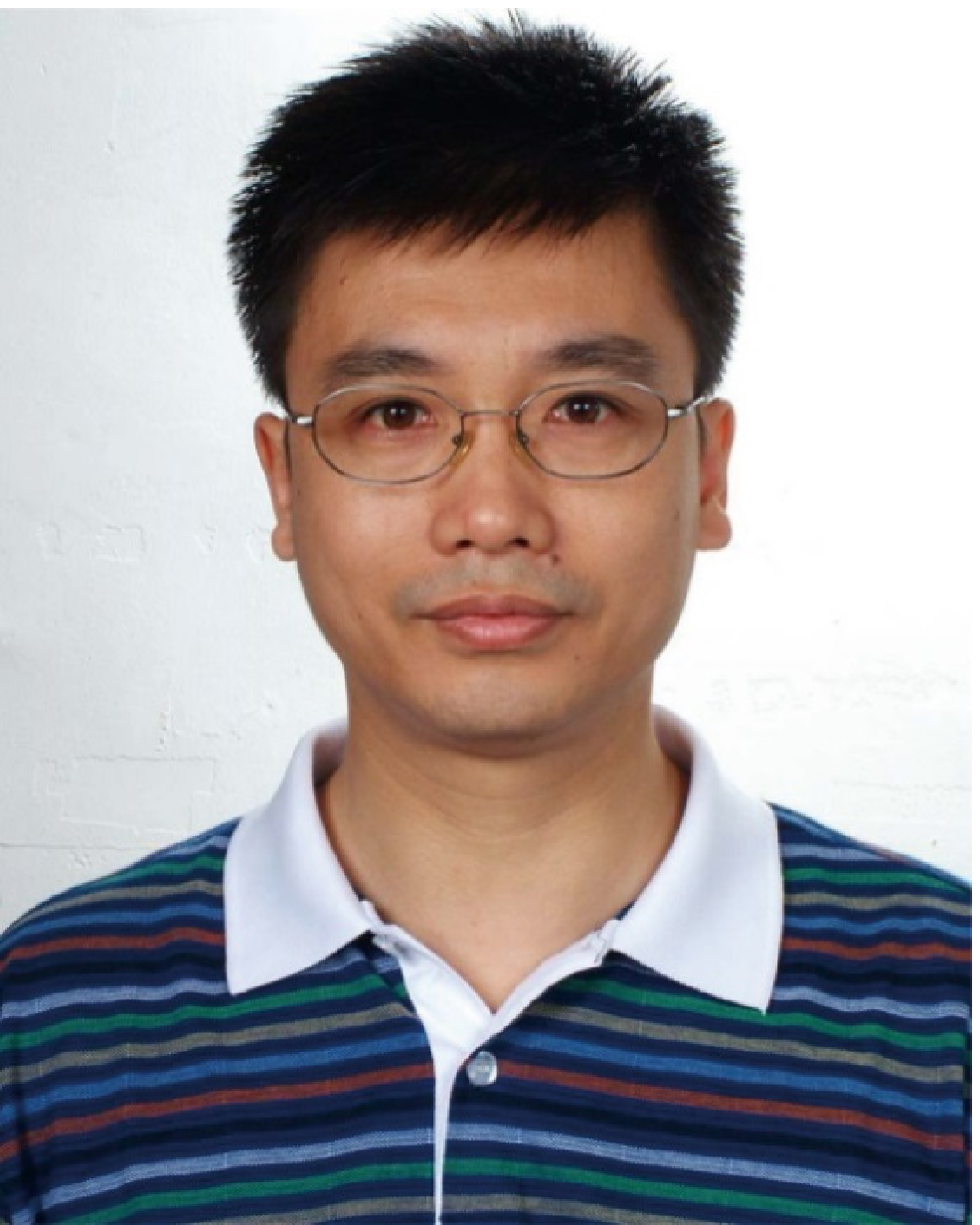}}]{Jianjun Wu} received the B.S., M.S., and Ph.D. degree from Peking University, Beijing, China, in 1989, 1992, and 2006, respectively. Since 1992, he has been with the School of Electronics Engineering and Computer Science, Peking University, and was appointed a Professor in 2014. His research interests are in the areas of satellite communications, wireless communications, and communications signal processing.
\end{IEEEbiography}

\begin{IEEEbiography}[{\includegraphics[width=1in,height=1.25in,clip,keepaspectratio]{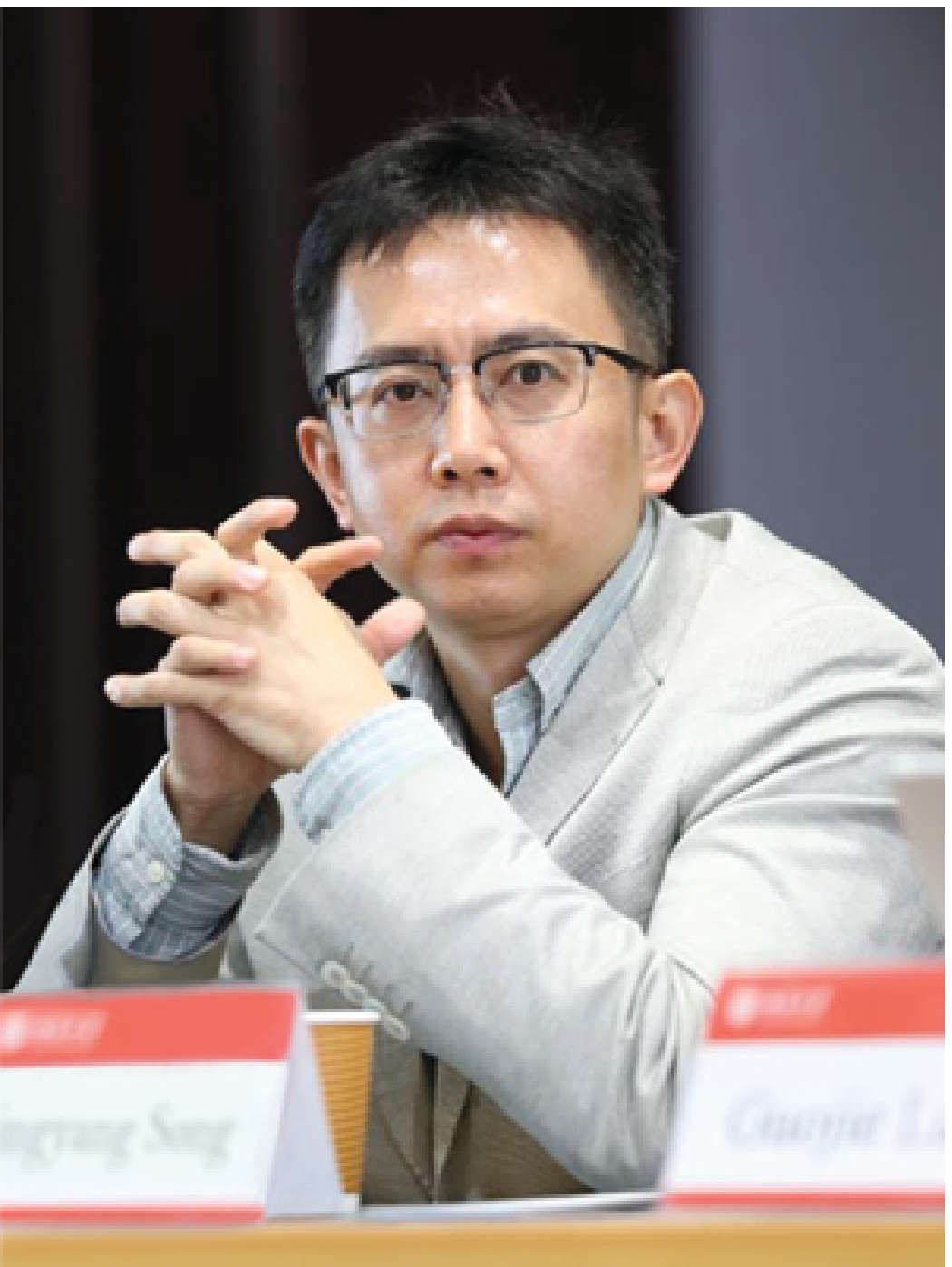}}]{Lingyang Song} (S'03-M'06-SM'12-F'19) received his Ph.D. from the the University of York, United Kingdom, in 2007, where he received the K. M. Stott Prize for excellent research. He worked as a research fellow at the University of Oslo, Norway, until rejoining Philips Research UK in March 2008. In May 2009, he joined the School of Electronics Engineering and Computer Science, Peking University, and is now a Boya Distinguished Professor. His main research interests include wireless communication and networks, signal processing, and machine learning. He was the recipient of the IEEE Leonard G. Abraham Prize in 2016 and the IEEE Asia Pacific (AP) Young Researcher Award in 2012. He has been an IEEE Distinguished Lecturer since 2015.
\end{IEEEbiography}

\end{document}